\newcommand{\Ali}{\textcolor{black}}
\newcommand{\payam}[1]{#1}
\newcommand{\pr}{\mathds{P}}
\newcommand{\EE}{\mathds{E}}
\newcommand{\R}{{\mathbb{R}}}
\newcommand{\cg}{\textnormal{\textsl{g}}}
\newtheorem{assumption}[theorem]{Assumption}
\begin{document}

\title*{Data-driven verification and synthesis of stochastic systems via barrier certificates}
\author{Ali Salamati, Abolfazl Lavaei, Sadegh Soudjani, and Majid Zamani}
\institute{Ali Salamati \at Computer Science Department, Ludwig-Maximilians-Universit\"{a}t M\"{u}nchen, Germany, \email{ali.salamati@lmu.de}
\and Abolfazl Lavaei \at Institute for Dynamic Systems and Control, ETH, Zurich, Switzerland. \email{alavaei@ethz.ch}
\and Sadegh Soudjani \at School of Computing, Newcastle University, Newcastle, United Kingdom. \email{sadegh.soudjani@newcastle.ac.uk}
\and Majid Zamani \at Computer Science Department, University of Colorado Boulder, the USA and Computer Science Department, Ludwig-Maximilians-Universität München, Germany. \email{majid.zamani@colorado.edu}
}

%
%
\maketitle

\abstract{In this work, we study verification and synthesis problems for safety specifications over unknown discrete-time stochastic systems.
	When a model of the system is available, barrier certificates have been successfully applied for ensuring the satisfaction of safety
	specifications. In this work, we formulate the computation of barrier certificates as a robust convex program (RCP). Solving the
	acquired RCP is hard in general because the model of the system that appears in one of the constraints of the RCP is unknown. We
	propose a data-driven approach that replaces the uncountable number of constraints in the RCP with a finite number of constraints
	by taking finitely many random samples from the trajectories of the system. We thus replace the original RCP with a scenario
	convex program (SCP) and show how to relate their optimizers. We guarantee that the solution of the SCP is a solution of the
	RCP with a priori guaranteed confidence when the number of samples is larger than a specific value. This provides a lower
	bound on the safety probability of the original unknown system together with a controller in the case of synthesis. We also discuss
	an extension of our verification approach to a case where the associated robust program is non-convex and show how a similar
	methodology can be applied. Finally, the applicability of our proposed approach is illustrated through three case studies.}

    \textbf{Keywords:}
    
Stochastic systems, Safety specification, Formal synthesis, Data-driven barrier certificate, Robust convex program, Scenario convex program.

\section{Introduction}
Ensuring safety and temporal requirements on cyber-physical systems is becoming more important in many applications including self-driving cars, power grids, traffic networks, and integrated medical devices. Complex requirements for such real-life practical systems can be expressed as linear temporal logic formulae \cite{kesten1998algorithmic}. Model-based approaches for satisfying such requirements have been studied extensively in the literature \cite{girard2005reachability,BK08,tabuada09,belta2017formal}. In the setting of formal approaches for stochastic systems, a number of abstraction-based methods has been developed for the verification and synthesis of dynamical systems in order to either verify the desired specifications or synthesize controllers enforcing these systems to satisfy such specifications \cite{LAB15,majumdar2020symbolic,SVORENOVA2017230,zamani2014symbolic}.
In order to improve  scalability of abstraction-based methods, some other techniques such as sequential gridding \cite{esmaeil2013adaptive,esmaeil2015faust}, discretization-free abstraction \cite{zamani2017towards}, and compositional abstraction-based techniques \cite{soudjani2015dynamic} have been introduced in the literature in order to efficiently deal with the verification and synthesis problems. 

\payam{An approach for formal verification and synthesis with respect to safety specifications in dynamical systems is to use a notion of barrier certificates \cite{prajna2004safety}}.
Barrier certificates have been the focus of the recent literature as an abstraction-free technique that is scalable with the dimension of the system, i.e., they do not require construction of an abstraction of the system and \payam{can} provide directly the controller together with the guarantee on \payam{the} satisfaction of the safety specification \cite{zhang2010safety}, \cite{yang2020efficient}, \cite{borrmann2015control}. A barrier-based methodology is introduced in \cite{prajna2004safety} in order to verify safety in deterministic hybrid systems. In \cite{prajna2007framework}, a framework is proposed for safety verification of stochastic systems using barrier certificates which is extended to stochastic hybrid systems. The authors in \cite{wang2017safety} present barrier certificates that ensure collision-free behaviors in multi-robot systems by minimizing  the difference between the actual and the nominal controllers subject to safety constraints. In \cite{sloth2012compositional}, a compositional analysis is proposed for verifying the safety of an interconnection of subsystems using barrier certificates. The \payam{results in} \cite{jagtap2019formal} uses barrier certificates for \payam{the} synthesis of controllers against complex requirements expressed as co-safe linear temporal logic formulas.

The common requirement of the approaches mentioned above is the fact that they need a mathematical model of the system. However, a precise model of dynamical systems is either not available in many application scenarios or too complex to be of any use. Therefore, there is a need to develop approaches which are capable of verifying or synthesizing controllers against safety specifications only based on collected data from the system.

\textbf{Related Literature.}
Data-driven methods have gained significant attentions recently for formally verifying \payam{some} desired specifications.
A data-enabled predictive control is introduced in \cite{ coulson2020distributionally} that utilizes noisy data of the system and produces optimal control inputs \payam{ensuring} the satisfaction of desired chance constraints with high probability. A data-driven model predictive control scheme is proposed in \cite{berberich2020data} which only requires initially measured input-output trajectories together with an upper bound on the \payam{dimension} of the unknown system. In \cite{tabuada2020data}, a methodology is developed in order to make a single-input single-output system stable \payam{only} based on data. The stability problem of black-box linear switching systems \payam{with desired confidences} is investigated in \cite{kenanian2019data} based on collected data. This approach is extended in \cite{wang2019data} by providing a methodology for computing the invariant sets of \payam{discrete-time black-box systems}. A novel Bayes-adaptive planning algorithm for data-efficient verification of uncertain Markov decision processes is introduced in \cite{wijesuriya2019bayes}. A framework is proposed in \cite{sadraddini2018formal} to provide a formal guarantee on data-driven model identification and controller synthesis.
In \cite{salamati2020data}, a methodology is developed for providing a probabilistic confidence over the verification of signal temporal logic properties for partially unknown stochastic systems based on collected data. The authors in \cite{plambeck2022} propose a framework to learn a decision tree as a model for a black box continuous system. 

\Ali{The work in \cite{dawson2022safe} develops a method to synthesize robust feedback controllers with safety and stability guarantees. In \cite{robey2021learning}, a data-driven approach is proposed in order to synthesize controllers for deterministic hybrid systems using barrier certificates while providing a correctness guarantee on the obtained barrier certificate. A data-driven, model-based approach is developed in \cite{abate2020formal} to provide stability guarantees using Satisfiability Modulo Theories (SMT). The authors in \cite{niu2021safety} developed a data-driven technique to synthesize controllers for unknown deterministic systems. The framework developed in \cite{clark2021control} computes barrier certificates for complete- and incomplete-information systems affected by Gaussian process and measurement noises under unbounded inputs.}

An optimization-based approach is \payam{proposed} in \cite{robey2020learning} to learn a control barrier certificate through safe trajectories under suitable Lipschitz smoothness assumption on the dynamical system. A sub-linear algorithm is developed in \cite{han2015sublinear} for the barrier-based data-driven model validation of dynamical systems which computes the barrier function using a large dataset of trajectories. In \cite{jagtap20202020control}, a two-step procedure is proposed to synthesize a controller for an unknown nonlinear system, where the first step is to learn a Gaussian process as a replacement of the unknown dynamics, and the second step is to construct the control barrier function for the learned dynamics.

A data-driven optimization called \emph{scenario convex program} (SCP) is introduced in \cite{calafiore2006scenario} to solve robust convex optimizations. This approach replaces the infinite number of constraints in the robust optimization with a finite number of constrained by sampling the uncertain variables from their distributions. The approach relates the feasibility of the SCP to \payam{that of} the robust optimization while providing bounds on the probability of violating the constraints. \payam{The results in} \cite{kanamori2012worst}  studies the same approach and relates worst-case violation of the constraints to the probability of their violation. While \cite{calafiore2006scenario,kanamori2012worst} focus on feasibility, the authors in \cite{esfahani2014performance} establish a quantitative relation between the optimal value of the robust optimization and its associated SCP.

The results of \cite{esfahani2014performance} are employed in \cite{nejat2021} for data-driven verification of dynamical systems using some inequalities characterizing barrier certificates. Our results presented here differ from the ones in \cite{nejat2021} in three main directions.
First, our approach is developed for stochastic dynamical systems subject to random disturbances with unknown distributions, while the work in \cite{nejat2021} is restricted to deterministic systems. Second, our approach also tackles controller synthesis problems, while \cite{nejat2021} only deals with the verification ones. Last but not least, we study a class of non-convex optimization problems that makes our approach applicable to larger classes of systems, while the result in \cite{nejat2021} is restricted to only  convex problems.

\smallskip

\textbf{Contributions.} \payam{Here, we propose formal verification and synthesis procedures for unknown stochastic systems with respect to safety specifications based on collected data.} We first cast a barrier-based safety problem
as a robust convex program (RCP). \payam{Solving the obtained RCP is hard in general because the unknown model of the system appears in the constraints. To tackle this issue,} we resort to a scenario-driven approach by collecting samples from the system. Using the results \payam{in} \cite{esfahani2014performance}, we connect the optimal solution of the acquired scenario convex program (SCP) \payam{with that of} the original RCP. We provide a lower bound on the safety probability of the \Ali{unknown stochastic system} using a certain number of data which is related to the desired confidence. We extend this result to provide a new confidence bound for a class of non-convex barrier-based safety problems. We conclude the paper by three case studies to illustrate the applicability of our approach.
\textbf{Outline.} The structure of this paper is as follows. Section~\ref{sec:problemset} gives the system definition and the problem statement, and presents the safety verification of \payam{stochastic} systems using barrier certificates. In Section~\ref{sec:datadriven}, we introduce the scenario convex program for the barrier-based safety problem and we connect its optimizer to that of the original optimization. Our approach for the safety verification of the unknown stochastic system is presented in Section~\ref{sec:safety}. In Section~\ref{sec:synthesis}, we explain our data-driven synthesis approach which enforces the safety specification with a certain confidence. An extension of the verification problem for a class of non-convex safety problems is discussed in Section~\ref{sec:datadriven2}.
To illustrate the effectiveness of our approach, three case studies are presented in Section~\ref{sec:case_study}. Finally, Section~\ref{sec:ref} concludes the paper.
\section{Preliminaries and Problem Statement}
\label{sec:problemset}
\subsection{Notations and Preliminaries}
The set of positive integers, non-negative integers, real numbers, non-negative real numbers, and positive real numbers are denoted by $\mathbb{N} := \{1,2,3,\ldots\}$, $\mathbb{N}_0 := \{0,1,2,\ldots\}$, $\mathbb{R}$, $\mathbb{R}_0^+$, and $\mathbb{R}^+$, respectively. \Ali{We denote the indicator function of a set $\mathscr A \subseteq X$ by $\mathbbm{1}_\mathscr A:X\rightarrow \{0,1\}$, where $\mathbbm{1}_\mathscr A(x)$ is $1$ if $x\in \mathscr A$, and $0$ otherwise}. \payam{Notation} $\mathbf{1}_m$ is used to indicate a column vector of ones in $\mathbb{R}^{m\times1}$. We denote by $\Vert x\Vert$ the Euclidean norm of any $x\in\R^n$. We also denote the induced norm of any matrix $A\in\mathbb R^{m\times n}$  by $\|A\| = \sup_{x\neq 0} \|Ax\|/\|x\|$.
Given $N$ vectors $x_i \in \mathbb R^{n_i}$, $n_i\in \mathbb N$, and $i\in\{1,\ldots,N\}$, we use $[x_1;\ldots;x_N]$ and $[x_1,\ldots,x_N]$ to denote the corresponding column and row vectors, respectively, with dimension $\sum_i n_i$. The absolute value of a real number $x$ is denoted by $|x|$. \Ali{For a function $f:X\rightarrow Y$, we denote its inverse by $f^{-1}:Y\rightarrow X$, whenever exists. A regularized incomplete beta function for parameters $(z;a,b)$ is defined as $\mathrm I(z;a,b)=\frac{\int_{0}^{z}u^{a-1}(1-u)^{b-1}du}{\int_{0}^{1}u^{a-1}(1-u)^{b-1}du}$.}
If a system, \payam{denoted} by $\mathcal{S}$, satisfies a property $\Psi$ during a time horizon $\mathcal{H}$, \payam{it is denoted by $\mathcal{S} \models_{\mathcal{H}}\Psi$}. We also use $\models$ in this paper to show the feasibility of a solution for an optimization problem.

The sample space of random variables is denoted by $\Omega$. The Borel $\sigma$-algebras on a set $X$ is denoted by $\mathfrak{B}(X)$. The measurable space on $X$ is denoted by $(X,\mathfrak{B}(X))$. We have two probability spaces in this work. The first one is represented by $(X,\mathfrak{B}(X),\pr)$ which is the probability space defined over the state set $X$ with $\pr$ as a probability measure. The second one, $(V_w,\mathfrak{B}(V_w),\pr_w)$, defines the probability space over $V_w$ for the random variable $w$ affecting the stochastic system with $\pr_w$ as its probability measure.
With \payam{a slight abuse} of \payam{the} notation, we use the same $\pr$ and $\pr_w$ when the product measures are needed in the formulations.
Considering a random variable $z$, $\text{Var}(z):=\EE(z^2) - (\EE(z))^2$ denotes its variance with $\EE$ being the expectation operator.

\subsection{System Definition}
In this work, we first deal with (potentially) unknown discrete-time continuous-space stochastic \payam{dynamical} systems as formalized next.

\begin{definition}
	\label{def:mainsys}
	A discrete-time stochastic system (dt-SS) is a tuple $\mathcal{S} = (X,V_w,w,f)$, where
	the Borel set $X\subset \mathbb{R}^n$ is the state set of the system,
	the Borel set $V_w$ is the uncertainty space, 
	$w:=\{w(t):\Omega \rightarrow V_w, t\in\mathbb{N}_0\}$ is a sequence of independent and identically distributed (i.i.d.) random variables on the Borel space $V_w$  with some distribution $\pr_w$,
	and the map $f:X\times V_w\rightarrow X$ is a measurable function that characterizes the state evolution of the system.
	\payam{The state trajectory of the system is constructed according to}
	\begin{equation} 
		\mathcal{S}: x(t+1)=f(x(t),w(t)), \quad t\in\mathbb{N}_{0}.
		\label{eq:mainsystem}
	\end{equation}
	%
	We denote a finite trajectory of the system by $\xi(t):=x(0)x(1)\ldots x(t)$, $t\in \mathbb{N}_0$.
\end{definition}

In this work, we assume that the map $f$ and the distribution of the uncertainty $\pr_w$ are unknown. Instead, we assume we can \payam{collect} $N$ \emph{independent} and \emph{identically distributed} \payam{state pairs $(x_i,x^+_i)$} by initializing the system at $x_i$ and observing its next state as \Ali{$x^+_i=f(x_i,w_i)$ for some random sample $w_i$}. The collected \Ali{dataset is denoted by}
\begin{equation}
	\mathcal{D}:=\Big\{(x_i,x^+_i)\Big\}\subset X^2,\quad i \in\{1,\cdots,N\}. 
	\label{eq:data}
\end{equation}


\subsection{Problem Statement}
\label{sub:ps}

\begin{definition}\label{safe}
	Given a set of initial states $X_{in}\subset X$, a set of unsafe states $X_u\subset X$, and a finite time horizon $\mathcal{H} \in \mathbb{N}_0$, the system $\mathcal{S}$ is called safe if all trajectories of $\mathcal{S}$ that start from $X_{in}$ never reach $X_u$ within horizon $\mathcal{H}$. We denote this safety property by $\Psi$ and its satisfaction by $\mathcal{S}$ is written as $\mathcal{S}\models_{\mathcal{H}}  \Psi$. \Ali{A state set $X$ containing the initial and unsafe sets is illustrated in Fig.~\ref{fig:barrier}}.
\end{definition}

\begin{figure}
	\centering 
	\includegraphics[width=0.5\linewidth]{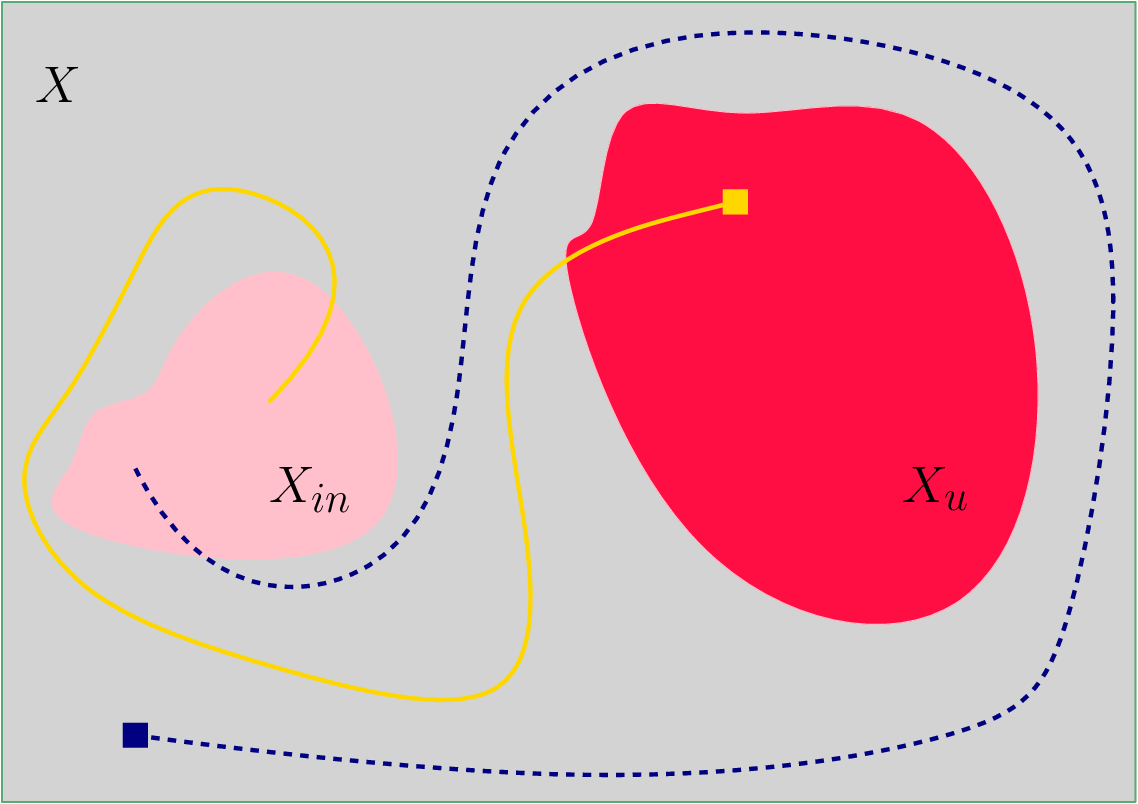}
	\caption{A set $X$ containing initial and unsafe sets $X_{in}$ and $X_u$. The blue dashed line illustrates a safe trajectory of the system, whereas the yellow one demonstrates an unsafe trajectory.
	}
	\label{fig:barrier}
\end{figure}

Since the system is stochastic and we do not know the distribution of $w$ and the map $f$,
we are interested in establishing a lower bound on the probability that the safety property $\Psi$ is satisfied by the trajectories of $\mathcal S$ while using only a dataset of the form \eqref{eq:data}.
\payam{Now, we state the main problem we are interested to solve here.}
\begin{problem}
	\label{prob:problem}
	Consider an unknown dt-SS $\mathcal{S}$ as in Definition~\ref{def:mainsys}. Provide a lower bound $(1-\rho) \in [0,1]$ on the probability of satisfying $\Psi$, \emph{i.e.,}
	\begin{equation*}
		\pr_w\big(\mathcal{S}\models_{\mathcal{H}} \Psi\big)\ge 1-\rho,
	\end{equation*}
	together with a confidence $(1-\beta) \in [0,1]$ using only a dataset $\mathcal{D}$ of the form \eqref{eq:data}. Moreover, establish a connection between the required size of dataset $\mathcal{D}$ and the desired confidence $1-\beta$.
\end{problem}


Therefore, we are interested in finding a \payam{potentially} tight lower bound. \Ali{The confidence $1-\beta$ in the statement of the problem is with respect to the probability distribution of the dataset $\mathcal{D}$ and is seen from the frequentist interpretation of probability: any algorithm that solves this problem collects dataset $\mathcal{D}$ using a probability distribution; while running the algorithm multiple times with different datasets $\mathcal{D}$, the algorithm gives wrong results (incorrect lower bound on the safety probability) in at most $\beta$ portion of the algorithm runs.} 

Fig.~\ref{fig:structure} shows an overview of our approach. \Ali{The block on the left represents a stochastic safety problem. The RCP block reformulates the safety problem as a robust optimization problem. Blocks SCP\textsubscript{$\scriptscriptstyle N$} and SCP\textsubscript{$\scriptscriptstyle N,\scriptscriptstyle \hat{N}$} solve the optimization problem introduced by the RCP block using finite number of samples. Finally, Theorem~\ref{theo:peyman} connects SCP's solutions to the original safety problem.}


\begin{figure}[ht]
	\centering 
	\includegraphics[width=.7\linewidth]{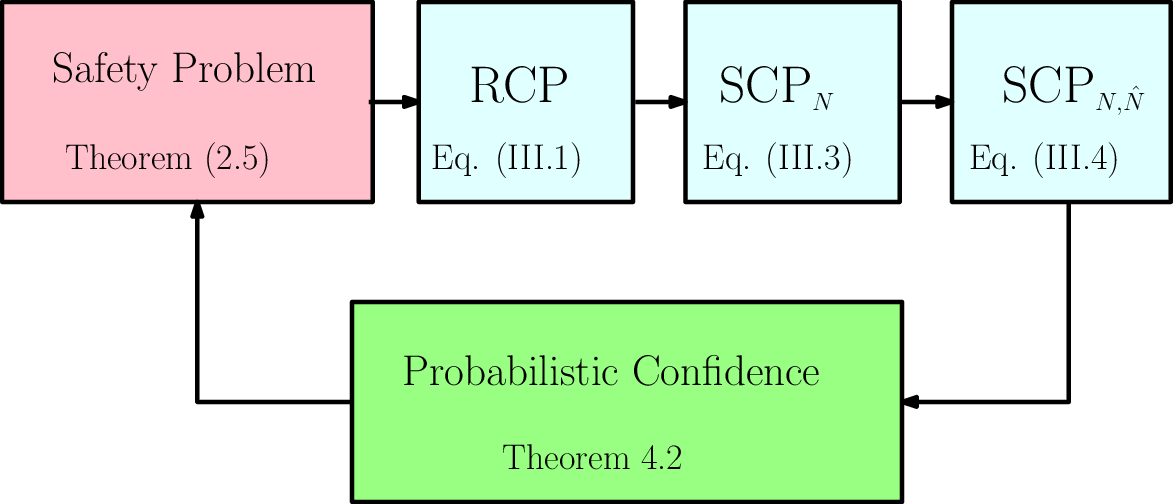}
	\caption{This figure shows an overview of the proposed scenario approach for verification of the safety specification.
	}
	\label{fig:structure}
\end{figure}
\subsection{Safety Verification via Barrier Certificates}
\label{sec:barrier}

\begin{definition}
	Given a dt-SS $\mathcal{S} = (X,V_w,w,f)$, a nonnegative function $\mathrm{B}:X\rightarrow\mathbb{R}_{0}^{+}$ is called a barrier certificate (BC) for $\mathcal{S}$ if there exist constants $\lambda > 1$ and $c\in \mathbb{R}_{0}^{+}$ such that
	\begin{align}
		&\mathrm{B}(x)\leq 1,\qquad\qquad\qquad\qquad\quad~~~\forall x\in X_{in},	\label{eq:bar1}\\
		&\mathrm{B}(x)\geq \lambda, \qquad\qquad\qquad\qquad\qquad\;\!\forall x\in X_u,\label{eq:bar2}\\
		&\EE\Big[\mathrm{B}(f(x,w))\mid x\Big]\leq~\mathrm{B}(x)+c,\quad\quad\forall x \in X,\label{eq:bar3}
	\end{align}
	where $X_{in}\subset X$ and $X_u\subset X$ are initial and unsafe sets corresponding to a given safety specification $\Psi$, respectively. 
	\label{def:barrier}
\end{definition}
\vspace{.2cm}


Next theorem, borrowed from \cite{jagtap2019formal}, provides a lower bound on the probability of satisfaction of the safety specification for a dt-SS.

\begin{theorem}
	\label{theo:kushner}
	Consider a dt-SS $\mathcal{S}$ and a safety specification $\Psi$. Assume there exists a non-negative barrier certificate $\mathrm{B}(x)$ which satisfies conditions \eqref{eq:bar1}-\eqref{eq:bar3} with constants $\lambda$ and $c$. Then
	\begin{align}
		\pr_w\big(\mathcal{S}\models_{\mathcal{H}}  \Psi\big)\ge 1- \frac{1+c\;\mathcal{H}}{\lambda},
		\label{safety_kushi}
	\end{align}
	with $\mathcal{H} \in \mathbb{N}_0$ being the finite time horizon \payam{associated with $\Psi$}.
	\label{theo:barrier1}
\end{theorem}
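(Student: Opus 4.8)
The plan is to recognize that condition~\eqref{eq:bar3} endows the process $V_t := \mathrm{B}(x(t))$ with a $c$-supermartingale structure, and then to invoke a Kushner-type maximal inequality to control the probability that this process ever crosses the level $\lambda$ within the horizon $\mathcal{H}$. First I would observe that, since $\mathrm{B}$ is non-negative, $V_t \ge 0$ for all $t$, and that \eqref{eq:bar3} reads $\EE[V_{t+1}\mid \mathcal{F}_t]\le V_t + c$, where $\mathcal{F}_t$ is the filtration generated by $x(0),\ldots,x(t)$ (conditioning on $x(t)$ alone suffices by the Markovian evolution~\eqref{eq:mainsystem}).

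The central step is the maximal inequality
\[
\pr\Bigl(\max_{0\le t\le \mathcal{H}} V_t \ge \lambda \,\Big|\, x(0)\Bigr)\le \frac{V_0 + c\,\mathcal{H}}{\lambda}.
\]
To prove this I would pass to the auxiliary process $W_t := V_t - ct$, which is a genuine supermartingale because $\EE[W_{t+1}\mid \mathcal{F}_t]\le V_t + c - c(t+1) = W_t$. Introducing the stopping time $\tau := \min\{t\ge 0 : V_t \ge \lambda\}\wedge \mathcal{H}$, the optional stopping theorem applied to $W_t$ at the bounded stopping time $\tau$ yields $\EE[W_\tau]\le \EE[W_0] = \EE[V_0]$, hence $\EE[V_\tau]\le \EE[V_0] + c\,\EE[\tau]\le \EE[V_0] + c\,\mathcal{H}$. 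On the event $A := \{\max_{0\le t\le \mathcal{H}} V_t\ge \lambda\}$ we have $V_\tau \ge \lambda$, while $V_\tau\ge 0$ always; therefore $\EE[V_\tau]\ge \lambda\,\pr(A)$, and combining the two bounds gives the claimed inequality.

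It then remains to translate this into a safety statement. A trajectory starting in $X_{in}$ that reaches $X_u$ at some time $t\le \mathcal{H}$ satisfies $V_t = \mathrm{B}(x(t))\ge \lambda$ by \eqref{eq:bar2}, so the event of reaching $X_u$ within $\mathcal{H}$ is contained in $A$. Using the initial bound $V_0 = \mathrm{B}(x(0))\le 1$ from \eqref{eq:bar1}, the maximal inequality gives $\pr_w(\text{reaching } X_u \text{ within } \mathcal{H})\le (1 + c\,\mathcal{H})/\lambda$, and \eqref{safety_kushi} follows by taking complements.

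I expect the main obstacle to be the careful justification of the maximal inequality, specifically the fact that the natural candidate supermartingale $W_t = V_t - ct$ is \emph{not} non-negative (its drift term $ct$ can dominate), so one cannot directly invoke the classical non-negative supermartingale maximal inequality. The stopping-time argument above circumvents this by exploiting the non-negativity of $V_\tau$ itself together with optional stopping; the delicate point to get right is verifying the integrability hypotheses for optional stopping, which hold here precisely because $\tau \le \mathcal{H}$ is bounded.
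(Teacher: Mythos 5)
Your proposal is correct. The paper does not actually prove this statement---it imports it verbatim from \cite{jagtap2019formal} (the bound goes back to Kushner's supermartingale safety theorems)---and your argument is precisely the standard proof behind that citation: interpret \eqref{eq:bar3} as a $c$-supermartingale condition on $V_t=\mathrm{B}(x(t))$, derive the maximal inequality $\pr\big(\max_{0\le t\le \mathcal{H}}V_t\ge\lambda\big)\le (V_0+c\mathcal{H})/\lambda$ via optional stopping of $W_t=V_t-ct$ at the bounded stopping time $\tau$, and then use \eqref{eq:bar1}--\eqref{eq:bar2} to convert level crossing of $\lambda$ into reachability of $X_u$. Your remark that the non-negativity must be exploited through $V_\tau$ rather than through $W_t$ itself is exactly the right technical point, and the integrability needed for optional stopping is indeed immediate from $\tau\le\mathcal{H}$ and $\EE[V_t]\le V_0+ct<\infty$.
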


\vspace{0mm}

In this work, we consider polynomial-type barrier certificates denoted by $\mathrm{B}(b,x)$, where $b$ is the vector containing the coefficients of the polynomial. \payam{Such a polynomial} with degree $k\in \mathbb{N}_0$ has the form
\begin{align}
	\mathrm{B}(b,x)=\sum_{\iota_1=0}^k\ldots\sum_{\iota_n=0}^kb_{\iota_1,\ldots,\iota_n}(x_1^{\iota_1}\ldots x_n^{\iota_n}),
	\label{eq:fixedbarrier}
\end{align} 
with $b_{\iota_1,\ldots,\iota_n}=0$ for $\iota_1+\ldots+\iota_n>k$. Hence, finding a polynomial barrier certificate reduces to determining the coefficients of the polynomial, namely $b_{\iota_1,\ldots,\iota_n}$. In the next section, we provide our data-driven approach for the construction of \payam{polynomial-type} barrier certificates.

\section{Data-driven Safety Verification}
\label{sec:datadriven}
We first cast the barrier-based safety problem in Theorem~\ref{theo:kushner} as a robust convex programming (RCP). We then provide a scenario-based approach in order to solve the obtained RCP using data collected from the system.

Satisfying the conditions of Theorem~\ref{theo:kushner} is equivalent to having a non-positive value for the optimal solution of the following RCP (i.e., $\mathcal{K}\le 0$):
\begin{align} 
	\text{RCP}:\left\{
	\begin{array}{ll}
		\underset{d}{\min}\quad \mathcal{K}\quad \quad\\
		\text{s.t.}
		\quad \max_z\big(g_z(x,d)\big) \!\leq\! 0, z \!\in\!\{1,\dots,5\},\forall x \!\in\! X,\\
		\quad ~~ \;\;d=[\mathcal{K};\lambda;c;b_{\iota_1,\ldots,\iota_n}],\\
		\quad ~~ \;\;\mathcal{K}\in \mathbb{R},\; \lambda > 1,\; c \geq 0, 
	\end{array}
	\right.
	\label{eq:RCP}
\end{align}
in which,
\begin{align}
	&g_1(x,d)=-\mathrm{B}(b,x)-\mathcal{K},\nonumber\\
	&\Ali{g_2(x,d)=(\mathrm{B}(b,x)-1-\mathcal{K})\mathbbm{1}_{X_{in}}(x),} \nonumber\\
	&\Ali{g_3(x,d)=(-\mathrm{B}(b,x)+\lambda-\mathcal{K})\mathbbm{1}_{X_{u}}(x),}\nonumber\\
	&g_4(x,d)=\frac{1+c\;\mathcal{H}}{\rho}-
	\lambda-\mathcal{K},\nonumber\\
	&g_5(x,d)=\EE\Big[\mathrm{B}(b,f(x,w))\mid x\Big]-\mathrm{B}(b,x)-c-\mathcal{K},
	\label{eq:sampledcondition}
\end{align}
where $(1-\rho)$ is a given lower bound for the safety probability. 

\begin{remark}
	The RCP~\eqref{eq:RCP} is in fact a robust convex optimization. It is a convex optimization since the constraints are convex with respect to decision variables in $d$ and \payam{objective function.} It is a robust optimization since the constraints have to hold for all $x\in X$. 
\end{remark}

\begin{remark}
	The RCP~\eqref{eq:RCP} always has a feasible solution. For instance, by choosing coefficients of $\mathrm{B}(b,x)$ equal to zero, $\lambda=2$, $c=0$, and $\mathcal{K}\geq\frac{1}{\rho}-2$, we get a feasible solution for the RCP. Moreover, the barrier certificate obtained from this RCP satisfies conditions \eqref{eq:bar1}-\eqref{eq:bar3} \payam{as long as} $\mathcal{K}\le 0$.
\end{remark}

Finding an optimal solution for the RCP in \eqref{eq:RCP} is hard in general because the map $f$ is unknown, the probability measure $\pr_w$ is also unknown (thus the expectation in $g_5$ cannot be computed analytically), and there are infinitely many constraints in the robust optimization since $x\in X$, \payam{where $X$ is a continuous set.} 
To tackle \payam{this}, we \payam{first} assign a probability distribution to the \payam{state set}, take $N$ i.i.d.  samples  $\{x_1, x_2,\ldots, x_N\}$ from this distribution, and replace the robust quantifier $\forall x\in X$ with $\forall  x_i\in X$, $i\in\{1,2,\ldots,N\}$. This results in the following scenario convex program denoted by SCP\textsubscript{$\scriptscriptstyle N$}:
\begin{align} 
	\text{SCP\textsubscript{$\scriptscriptstyle N$}}:\left\{
	\begin{array}{ll}
		\underset{d}{\min}\quad \mathcal{K}\quad \quad\\
		\text{s.t.}\quad \;
		\max_z\,g_z(x_i,d) \!\leq\! 0, \,\,\forall i\in \{1,\ldots,N\},\\
		\qquad\qquad ~\;\;z \!\in\!\{1,\dots,5\},\\
		\qquad ~~~d=[\mathcal{K};\lambda;c;b_{\iota_1,\ldots,\iota_n}],\\
		\qquad ~~~\mathcal{K}\in \mathbb{R},\; \lambda > 1,\; c \geq 0.
	\end{array}
	\right.
	\label{eq:SCN1}
\end{align}

To tackle the issue of unknown $\pr_w$, 
we replace the expectation in $g_5$ with its empirical approximation by sampling $\hat N$ i.i.d. values $w_j,\; j \in \{1,\ldots,\hat{N}\}$, from $\pr_w$ for each $x_i$, which gives the following scenario convex program denoted by SCP\textsubscript{$\scriptscriptstyle N,\hat{N}$}:
\begin{align} 
	\text{SCP\textsubscript{$\scriptscriptstyle N,\hat{N}$}}\!:\!\left\{
	\begin{array}{ll}
		\underset{d}{\min}\quad \mathcal{K}\quad \quad\\
		\text{s.t.}\quad\;
		\max_z\,\bar g_z(x_i,d) \!\leq\! 0, \,\,\forall i\in \{1,\ldots,N\},\\
		\qquad\qquad ~\;\;z \!\in\!\{1,\dots,5\},\\
		\qquad ~~d=[\mathcal{K};\lambda;c;b_{\iota_1,\ldots,\iota_n}],\\
		\qquad ~~\mathcal{K}\in \mathbb{R},\; \lambda > 1,\; c \geq 0,
	\end{array}
	\right.
	\label{eq:SCN2}
\end{align}
where $\bar g_z:=g_z$ for all $z\in\{1,2,3,4\}$ and
\begin{align}
	\bar{g}_5(x_i,d):=\frac{1}{\hat{N}}\sum_{j=1}^{\hat{N}}\mathrm{B}(b,f(x_i,w_j))-  \;\mathrm{B}(b,x_i)-c+\delta-\mathcal{K}.
	\label{eq:empirical}
\end{align}
In SCP\textsubscript{$\scriptscriptstyle N,\hat{N}$}, $f(x_i,w_j)$ is the next state of the system from \payam{the current state} $x_i$ with the noise realization $w_j$. Therefore, the solution of the SCP\textsubscript{$\scriptscriptstyle N,\hat{N}$} can be obtained using only the dataset $\mathcal D$ without \payam{the} knowledge of $f$ and $\pr_w$.
The optimal value for the objective function of SCP\textsubscript{$\scriptscriptstyle N,\hat{N}$} is denoted by $\mathcal{K}^*(\mathcal{D})$. We also denote by $\hat{\mathrm{B}}(b,x \,|\, \mathcal{D})$ the barrier function constructed based on the solution of SCP\textsubscript{$\scriptscriptstyle N,\hat{N}$} in \eqref{eq:SCN2}.

Note that $\bar{g}_5(x_i,d)$ in \eqref{eq:empirical} has an additional parameter $\delta> 0$ compared to $g_5$. This \payam{parameter} is added to make the last inequality more conservative in order to capture the error coming from replacing the expectation with the empirical mean.
%
We use Chebyshev's inequality \cite{hernandez2001chebyshev} to quantify such an error with the associated confidence. Let us define the variance of the empirical approximation as
\begin{align}
	\sigma^2 := \text{Var}\Big(\frac{1}{\hat{N}}\sum_{j=1}^{\hat{N}}\mathrm{B}(b,f(x,w_j))\Big),
	\label{eq:variance_1}
\end{align}
\payam{where the variance is taken with respect to $w_j$.} We assume that there is a bound $\hat M$ such that
\begin{equation}
	\label{eq:variance_2}
	\text{Var}\big(\mathrm{B}(b,f(x,w)\big)\leq \hat{M},\quad \forall x\in X.
\end{equation}
This assumption gives us a bound for $\sigma^2$ in \eqref{eq:variance_1} as $\sigma^2\le \frac{\hat M}{\hat N}$ due to $w_j$ being independent.
The idea of replacing the expectation by the empirical mean in an optimization problem and relating the associated solutions based on Chebyshev's inequality is also used in \cite{SM18_Concentration}.
Next theorem shows that the barrier certificate computed using the optimal solution of the SCP\textsubscript{$\scriptscriptstyle N,\hat{N}$} is a feasible barrier certificate for SCP\textsubscript{$\scriptscriptstyle N$} in \eqref{eq:SCN1} with a certain confidence.

\begin{theorem}
	\label{theo:empirical}
	Let $\hat{\mathrm{B}}(b,x\,|\,\mathcal{D})$ be a feasible solution of the SCP\textsubscript{$\scriptscriptstyle N,\hat{N}$} for some $\delta> 0$, and \payam{assume} the inequality~\eqref{eq:variance_2} holds with a given $\hat{M}$. \payam{Then for any $\beta_s \in (0,1]$, we get}
	\begin{equation}
		\label{eq:feasibility123}
		\pr_w\Big(\hat{\mathrm{B}}(b,x\,|\,\mathcal{D}) \models\text{SCP\textsubscript{$\scriptscriptstyle N$}}\Big)\geq 1- \beta_s,
	\end{equation}
	provided that the number of samples in the empirical mean satisfies $\hat{N} \geq \frac{\hat{M}}{\delta^2\beta_s}$.
\end{theorem}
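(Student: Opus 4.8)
The plan is to exploit the fact that SCP\textsubscript{$\scriptscriptstyle N$} and SCP\textsubscript{$\scriptscriptstyle N,\hat{N}$} differ in only a single constraint. By construction $\bar g_z = g_z$ for every $z\in\{1,2,3,4\}$, so any feasible point of SCP\textsubscript{$\scriptscriptstyle N,\hat{N}$} automatically satisfies the first four families of constraints of SCP\textsubscript{$\scriptscriptstyle N$} at every sampled state $x_i$. Hence the only obstruction to $\hat{\mathrm B}(b,x\,|\,\mathcal D)$ being feasible for SCP\textsubscript{$\scriptscriptstyle N$} is the expectation constraint $g_5$, and I would isolate it and write it as a perturbation of the empirical constraint $\bar g_5$ that the feasible solution already satisfies.

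Concretely, let $d=[\mathcal K;\lambda;c;b]$ be the decision vector underlying the given feasible solution $\hat{\mathrm B}(b,x\,|\,\mathcal D)$. For any sampled state $x_i$ I would write
\begin{equation*}
g_5(x_i,d) = \bar g_5(x_i,d) + \Big(\EE\big[\mathrm B(b,f(x_i,w))\mid x_i\big] - \tfrac{1}{\hat N}\sum_{j=1}^{\hat N}\mathrm B(b,f(x_i,w_j))\Big) - \delta.
\end{equation*}
Since feasibility of SCP\textsubscript{$\scriptscriptstyle N,\hat{N}$} gives $\bar g_5(x_i,d)\le 0$, the desired inequality $g_5(x_i,d)\le 0$ holds as soon as the bracketed approximation error does not exceed $\delta$. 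In other words, the artificial margin $\delta$ in \eqref{eq:empirical} is exactly the quantity that must dominate the error incurred by replacing the unknown expectation with its empirical mean.

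The quantitative core of the argument is then Chebyshev's inequality. Writing $Y:=\tfrac{1}{\hat N}\sum_{j=1}^{\hat N}\mathrm B(b,f(x_i,w_j))$, the empirical mean is an unbiased estimator of $\EE[\mathrm B(b,f(x_i,w))\mid x_i]$, and by \eqref{eq:variance_2} together with the independence of the $w_j$ its variance obeys $\mathrm{Var}(Y)=\sigma^2\le \hat M/\hat N$. Chebyshev's inequality then gives
\begin{equation*}
\pr_w\big(|Y-\EE[Y]|>\delta\big)\le\frac{\mathrm{Var}(Y)}{\delta^2}\le\frac{\hat M}{\hat N\,\delta^2}.
\end{equation*}
Forcing the right-hand side below $\beta_s$ is precisely the stated requirement $\hat N\ge \hat M/(\delta^2\beta_s)$; on the complementary event the approximation error stays below $\delta$, so $g_5(x_i,d)\le 0$ and $\hat{\mathrm B}$ is feasible for SCP\textsubscript{$\scriptscriptstyle N$}, yielding \eqref{eq:feasibility123}.

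The step I expect to be most delicate is the passage from a single sampled state to the whole constraint set $\{x_1,\dots,x_N\}$, together with the dependence of the fitted coefficients $b$ on the very samples $w_j$ used to form the empirical mean. For a fixed $b$ the Chebyshev bound above is immediate, but $b$ is itself a random function of the $w_j$, so the estimator $Y$ and its target are in principle correlated. I would resolve this by invoking the uniformity of \eqref{eq:variance_2} — it is assumed to hold for all admissible $b$ and all $x\in X$ — so that the variance estimate transfers to the realized certificate, and by reading \eqref{eq:feasibility123} as a statement about this realized certificate rather than as one demanding an extra union bound over the $N$ scenarios. Turning this into a fully rigorous, rather than merely per-point, statement is the main obstacle and where I would concentrate the effort.
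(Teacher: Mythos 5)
Your proof follows essentially the same route as the paper's: both isolate the fifth constraint (the other four being identical by construction since $\bar g_z=g_z$ for $z\in\{1,2,3,4\}$), bound the gap between the expectation and the empirical mean via Chebyshev's inequality using $\sigma^2\le \hat M/\hat N$, and solve $\hat M/(\hat N\delta^2)\le\beta_s$ for $\hat N$. The two delicate points you flag --- uniformity over the $N$ sampled states and the dependence of the fitted coefficients $b$ on the noise samples $w_j$ --- are likewise left implicit in the paper's own proof, which applies Chebyshev pointwise exactly as you do.
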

\vspace{.2cm}
\begin{proof}
	\payam{By the statement of the theorem, we have}  $\hat{\mathrm{B}}(b,x\mid \mathcal{D})\models\text{SCP\textsubscript{$\scriptscriptstyle N,\hat{N}$}}$.
	The difference between the empirical mean in \eqref{eq:empirical} and the expected value in \eqref{eq:SCN1} can be quantified by invoking the Chebyshev's inequality \payam{as:}
	\begin{equation}
		\pr_w\Big(\abs{\EE\big [\mathrm{B}(b,f(x,w))\!\mid\! x\big ]-\frac{1}{\hat{N}}{\sum_{j=1}^{\hat{N}}\!\mathrm{B}(b,f(x,w_j))}}\!\leq\!\delta\Big)\!\geq\! 1\!-\!\frac{\sigma^2}{\delta^2},
		\label{eq:chebyschevproof}
	\end{equation}
	where $\delta\in \mathbb{R}^+$, and $\sigma^2$ is defined in \eqref{eq:variance_1} \cite{hernandez2001chebyshev}.
	Since all the first four feasibility conditions are the same \payam{as} in \eqref{eq:SCN1} and \eqref{eq:SCN2}, $\hat{\mathrm{B}}(b,x\,|\,\mathcal{D})$ is a feasible solution for those conditions of SCP\textsubscript{$\scriptscriptstyle N$} with probability one. The only remaining concern is the last feasibility condition. According to \eqref{eq:chebyschevproof}, one can deduce that $\hat{\mathrm{B}}(b,x\,|\,\mathcal{D})$ is a feasible solution for SCP\textsubscript{$\scriptscriptstyle N$} with a confidence \payam{of} at least $1-\frac{\sigma^2}{\delta^2}$. Furthermore, we have $\sigma^2\leq\frac{\hat{M}}{\hat N}$ by \payam{having} $\text{Var}(\mathrm{B}(b,f(x,w)))\leq\hat{M}$, and hence
	$$\pr_w\big(\hat{\mathrm{B}}(b,x\,|\,\mathcal{D})\models\text{SCP\textsubscript{$\scriptscriptstyle N$}}\big)\geq\! 1\!-\!\frac{\hat{M}}{\delta^2\hat{N}}.$$
	By the above inequality, we get $\beta_s\ge \frac{\hat{M}}{\delta^2\hat{N}}$ and consequently $\hat{N}\geq\frac{\hat{M}}{\delta^2\beta_s}$. This completes the proof.
\end{proof}

\medskip

\begin{remark}
	When the system has additive noise, \emph{i.e.,} 
	\begin{align}
		x(t+1) = f_a(x(t)) + w(t),\nonumber
	\end{align}
	the condition \eqref{eq:variance_2} can be established by having a bound on $f_a(\cdot)$ and bounds on moments of the noise $w$.
	%
	For instance, in the case of one-dimensional systems (\payam{i.e.,} $n=1$), we have $\mathrm{B}(b,x) = \sum_{\iota=0}^{k} b_{\iota} x^{\iota}$ and the variance of $\mathrm{B}(\cdot)$ can be expanded as follows:
	\begin{align*}
		& \text{Var}(\mathrm{B}(b,f(x,w))) = \text{Var}\Big(\sum_{\iota=0}^{k} b_{\iota} f(x,w)^{\iota}\Big)\\
		&=\!\text{Var}\Big(\!\sum_{\iota=0}^{k} b_{\iota} (f_a(x) + w)^{\iota}\Big) \!=\!\text{Var}\Big(\!\sum_{\iota}^{k}\!\sum_{j=0}^{\iota}\! b_{\iota} {\iota \choose j} \!f_a(x)^{\iota-j} w^{j}\!\Big)\\
		& =\!\text{Var}\Big(\sum_{j=1}^{k}\mathrm{g}_{j}(x) w^{j}\Big) \text{ with } \mathrm{g}_{j}(x) := \sum_{\iota=j}^{k} b_{\iota} {\iota \choose j} f_a(x)^{\iota-j}\\
		& =\! \sum_{j=1}^{k}\sum_{z=1}^{k} \mathrm{g}_j(x)\mathrm{g}_z(x)(\EE[w^{j+z}] - \EE[w^j]\EE[w^z]).
	\end{align*}
	This means the variance can be bounded using upper bounds of $f_a(\cdot)$ and moments of $w$.
\end{remark}

As it can be seen from Theorem~\ref{theo:empirical}, higher number of samples  $\hat{N}$ is needed in order to have a smaller empirical approximation error $\delta$, and to provide a better confidence bound. In fact, $\hat{N}$ and $\delta$ are required to solve the SCP\textsubscript{$\scriptscriptstyle N,\hat{N}$} in \eqref{eq:SCN2}. Later in the next section, we show how the value of  $\beta_s$ affects the total confidence concerning the safety of the stochastic system. 
\begin{remark}
	Note that our results presented in this paper are valid for any choice of the probability distribution $\pr$ \Ali{with its support being the state set $X$ that satisfies a regularity assumption formulated in the next section (cf.~Assumption~\ref{ass:G}).
		This assumption holds for a wide range of distributions including uniform, truncated normal, and exponential distributions.} From the algorithmic perspective, this distribution affects the collected data points $x_i$ and the optimal solution of the SCP\textsubscript{$\scriptscriptstyle N$}. The confidence formulated in our paper is also with respect to this distribution. 
	We choose $\pr$ to be a uniform distribution in the case study section.
\end{remark}
\section{Safety Guarantee over Unknown Stochastic Systems}
\label{sec:safety}
\label{sub:main_safe}
In the previous section, we established the connection between the two optimizations SCP\textsubscript{$\scriptscriptstyle N$} and SCP\textsubscript{$\scriptscriptstyle N,\hat{N}$}, and showed that the solution of SCP\textsubscript{$\scriptscriptstyle N,\hat{N}$} is a feasible solution for SCP\textsubscript{$\scriptscriptstyle N$} with a certain confidence if the number of samples $\hat N$ is chosen appropriately (cf. Theorem~\ref{theo:empirical}). In this section, we focus on the relation between the original RCP and the SCP\textsubscript{$\scriptscriptstyle N$} utilizing the fundamental result of \cite{esfahani2014performance} and provide an end-to-end safety guarantee over the unknown stochastic system with a priori guaranteed confidence.
%
\Ali{To do so, we need to raise the following regularity assumptions on the functions and the chosen probability measure $\pr$}.
\begin{assumption}
	\label{assum:lip}
	Functions $g_1$, $g_2$, $g_3$, and $g_5$ are all Lipschitz continuous with respect to $x$ with Lipschitz constants $\mathrm{L}_{x_1}$, $\mathrm{L}_{x_2}$, $\mathrm{L}_{x_3}$, and $\mathrm{L}_{x_5}$, respectively.
	\Ali{
		Therefore, the Lipschitz constant $\mathrm{L}_{x} := \mathrm{L}_{x_1}+\mathrm{L}_{x_2}+\mathrm{L}_{x_3}+\mathrm{L}_{x_5}$ is a Lipschitz constant for $\max_z g_z(x,d), z\in\{1,\ldots,5\}\setminus\{4\}$.
		In addition, if $g_1$, $g_2$, $g_3$, and $g_5$ are analytic over a compact domain $X$, the Lipschitz constant of $\max_z g_z(x,d)$ is
		$\mathrm{L}_{x} := \max \big\{\mathrm{L}_{x_1},\mathrm{L}_{x_2},\mathrm{L}_{x_3},\mathrm{L}_{x_5}\big\}$.
	}
\end{assumption}

\Ali{
	\begin{assumption}
		\label{ass:G}
		There is a strictly increasing function $G:\mathbb{R}_0^+ \rightarrow [0,1]$, where $G(0)=0$ such that
		\begin{align}
			\pr[\mathrm{b}(x,r)]\geq G(r)\qquad \forall x \in X,
			\label{eq:gofr}
		\end{align}
		where $\mathrm{b}(x,r) \subset X$ is an open ball centered at point $x$ with radius $r$.
	\end{assumption}
}
\Ali{Note that any probability distribution, for which the above lower bound function $G(r)$ can be computed, can be used in our approach for sampling.
}
\medskip
\Ali{
	\begin{remark}
		The probability distribution from which $x_i$ is sampled must satisfy Assumption~\ref{ass:G}.
		This assumption requires having a strictly increasing function $G:\mathbb{R}_0^+ \rightarrow [0,1]$ that satisfies
		\begin{align*}
			\pr[\mathrm{b}(x,r)]\geq G(r),\qquad \forall x \in X.
		\end{align*}
		Then, the probability distribution $\pr$ should assign positive probability to any ball with positive radius. This means no ball $\mathrm{b}(x,r)\subset X$ could be excluded from sampling in the approach with some non-trivial probability. 
	\end{remark}
	Next, we introduce the main result which connects the safety of an unknown stochastic system directly to data collected from the system.
}
\begin{theorem}
	\label{theo:peyman}
	Consider an unknown dt-SS, as in~\eqref{eq:mainsystem}, and safety specification $\Psi$. 
	\Ali{Let Assumptions~\ref{assum:lip} and \ref{ass:G} hold with Lipschitz constant $\mathrm{L}_{x}$ and function $G(r)$, respectively.} Assume $\hat{N}$ is selected for the SCP\textsubscript{$\scriptscriptstyle N,\hat{N}$} as in Theorem~\ref{theo:empirical} in order to provide confidence $1-\beta_s$. \Ali{
		Denote by $\mathcal{K}^\ast(\mathcal{D})$ the optimal value of the optimization problem in \eqref{eq:SCN2} using $N$ samples and parameter $\rho\in(0,1]$.
		For any $\beta \in [0,1]$,
		the following statement holds with a confidence of at least $(1-3\beta-\beta_s)$:\\
		\begin{equation*}
			\label{eq:lower_bound}
			\pr_w\big(\mathcal{S}\models_{\mathcal{H}} \Psi \big)\ge1-\rho,
		\end{equation*}
		if 
		\begin{align}
			\mathcal{K}^\ast(\mathcal{D})+\mathrm{L}_{x}\;G^{-1}(\epsilon)\leq 0,
			\label{eq:main_condition_1}
		\end{align}
		where function $G$ defined in \eqref{eq:gofr}, and $\epsilon=\mathrm I^{-1}(1-\beta;\mathcal Q+3,N-\mathcal{Q}-2)$.
	}
\end{theorem}
\begin{proof}
	\Ali{
		Denote the optimal values of the RCP and the SCP\textsubscript{$\scriptscriptstyle N$} by $\mathcal K^\ast$ and $\mathcal K^\ast_{\mathsf m}(\mathcal D)$, respectively.
		According to \cite[Theorem~3.6]{esfahani2014performance}, one has
		\begin{align*}
			\pr\big(\mathcal{K}^\ast_{\mathsf m}(\mathcal D)\leq \mathcal K^\ast\leq \mathcal{K}^\ast_{\mathsf m}(\mathcal D)+\mathrm{L}_{sp}H(\epsilon)\big)\geq1-\beta,
		\end{align*}
		for a chosen $\epsilon$ and any $N\geq N(\epsilon,\beta)$ as in \cite[Theorem 2.2]{esfahani2014performance}.
		Equivalently, the above inequality holds for a given $N$ and $\epsilon \leq \mathrm I^{-1}(1-\beta;\mathrm d,N-\mathrm d+1)$.
		In this expression, $\mathrm d$ is the number of decision variables, and $H(\cdot)$ is a uniform level-set bound as defied in \cite[Definition 3.1]{esfahani2014performance}.  Constant $\mathrm{L}_{sp}$ is a Slater constant as defined in \cite[equation (5)]{esfahani2014performance}.
		Since the original RCP in \eqref{eq:RCP} is a min-max optimization problem, the constant $\mathrm{L}_{sp}$ can be selected as one according to \cite[Remark~3.5]{esfahani2014performance}. By choosing $\mathrm{d}:=\mathcal{Q}+3$, one obtains the parameters of the incomplete beta function in the theorem statement. Based on \cite[Proposition 3.8]{esfahani2014performance}, $H(\epsilon) = \mathrm{L}_xG^{-1}(\epsilon)$, where $\mathrm{L}_x$ is the Lipschitz constant of RCP as in Assumption~\ref{assum:lip}, and $G(\cdot)$ as in \eqref{eq:gofr}.
		Now, one can readily deduce that
		\begin{equation}
			\pr\big(\mathcal{K}^*\leq\mathcal{K}^\ast_{\mathsf m}(\mathcal D)+\mathrm{L}_xG^{-1}(\epsilon)\big)\geq 1-3\beta.
			\label{eq:proof1}
	\end{equation}}
	\Ali{Confidence $\beta$ is multiplied by $3$ since the Lipschitz continuity is needed in \eqref{eq:RCP} in three different regions and, hence, we leverage the results in \cite{murali2022scenario} to deal with this issue by multiplying $\beta$ by three.} On the other hand, due to the particular selection of $\hat N$ and $\beta_s$ according to Theorem~\ref{theo:empirical}, \payam{we know that \eqref{eq:feasibility123} holds.} Therefore,
	\begin{equation}
		\label{eq:optimal_compare}
		\pr\left(\mathcal{K}^\ast_{\mathsf m}(\mathcal D)\le \mathcal K^\ast(\mathcal D)\right)\ge 1-\beta_s.
	\end{equation}
	\Ali{Define the events
		$\mathcal A  := \{\mathcal D\,|\, \mathcal{K}^\ast\le \mathcal{K}^\ast_{\mathsf m}(\mathcal D)+\mathrm{L}_xG^{-1}(\epsilon)\}$,
		$\mathcal B := \{\mathcal D\,|\,\mathcal{K}^\ast_{\mathsf m}(\mathcal D)\le \mathcal K^\ast(\mathcal D)\}$,
		and
		$\mathcal C :=\{\mathcal D\,|\,\mathcal K^\ast(\mathcal D)+\mathrm{L}_xG^{-1}(\epsilon)\le 0\}$,
		where $\pr(\mathcal A)\ge 1-3\beta $ and  $\pr(\mathcal B)\ge 1-\beta_s$.
		The inequalities in $\mathcal A$ and $\mathcal B$ satisfy
		\begin{equation}
			\label{eq:chain}
			\mathcal{K}^\ast\leq\mathcal{K}^\ast_{\mathsf m}(\mathcal D)+\mathrm{L}_xG^{-1}(\epsilon)\le \mathcal K^\ast(\mathcal D)+\mathrm{L}_xG^{-1}(\epsilon).
		\end{equation}
		Note that any element $\mathcal D$ that belongs to $\mathcal C$ will make the right-hand side of \eqref{eq:chain} non-positive. In addition, if this element also belongs to $\mathcal A\cap \mathcal B$, the two inequalities in \eqref{eq:chain} will also hold, and we get $\mathcal{K}^\ast\le 0$. 
		%
		%
		\begin{align*}
			& \pr(\mathcal K^\ast\le 0)
			%
			\ge  \pr(\mathcal A\cap \mathcal B)
			\geq 1-\pr(\mathcal A^c)-\pr(\mathcal B^c)\nonumber
			\geq 1-3\beta-\beta_s.
		\end{align*}
		%
		This completes the proof since non-positiveness of $\mathcal K^\ast$ ensures a safety lower bound $(1-\rho)$ with confidence of at least $1-3\beta-\beta_s$.}
\end{proof}

\Ali{
	\begin{corollary}
		\label{rem:gofepsilon}
		If samples are collected uniformly from a hyper rectangular state set with edges of length $\eta_x(i)$ in each dimension $i$, then one can compute $G(\epsilon)$ as $\frac{a\epsilon^n}{\prod_{i=1}^{n} \eta_x(i)}$
		, where $a=\dfrac{1}{2^n}\frac{\pi^{\frac{n}{2}}}{\Gamma(\frac{n}{2}+1)}$ with the Gamma function defined as $\Gamma(k) = 1\times 2\times 3\ldots\times(k-1)$ and $\Gamma(k+\frac{1}{2})=\frac{1}{2}\times \frac{3}{2}\times\ldots(k-\frac{3}{2})(k-\frac{1}{2})\pi^{\frac{1}{2}}$ for all positive integers.
	\end{corollary}
}
\Ali{
	\begin{corollary}
		If the state set is an n-dimensional hypersphere with radius $\tilde{r}$ and the data is sampled uniformly, then one has
		\begin{align*}
			G(\epsilon)=\frac{1}{2}\Big[\mathrm I(1-\frac{c_1^2}{\tilde{r}^2};\frac{n+1}{2},\frac{1}{2})+\frac{\epsilon^n}{\tilde{r}^n}\mathrm I(1-\frac{c_2^2}{\epsilon^2};\frac{n+1}{2},\frac{1}{2}\Big],
		\end{align*}
		where $c_1=\frac{2\tilde{r}^2-\epsilon^2}{2\tilde r}$, and $c_2=\frac{\epsilon^2}{2\tilde r}$.
		\label{rem:gofepsilon2}
	\end{corollary}
}

\Ali{
	\begin{remark}
		For uniform sampling, the function $G(r)$ is proportional to $r^n$. Therefore, the sample complexity of the proposed approach is in the order of $(\frac{v\mathrm{L}_x}{\epsilon})^n$, where $v$ is the volume of state set and $n$ is the dimension of the state set.
	\end{remark}
}

\begin{remark}
	The barrier function constructed based on the finite number of samples according to the above theorem together with the obtained parameters $c$ and $\lambda$ satisfies the conditions \eqref{eq:bar1}-\eqref{eq:bar3} in Definition~\ref{def:barrier} with a confidence of at least $1-3\beta-\beta_s$.
	\label{remark:tooriginal}
\end{remark}
\begin{remark}
	Note that the constraint $g_4$ in \eqref{eq:RCP} enforces the constraint  $\pr(\mathcal{S}\models_{\mathcal{H}} \Psi)\geq 1-\rho$ for a given $\rho$. When $\rho$ is not fixed, one can eliminate this constraint from the optimization and guarantee directly the following inequality
	\begin{align*}
		\pr_w(\mathcal{S}\models_{\mathcal{H}} \Psi)\geq 1-\frac{1+c^*\mathcal{H}}{\lambda^*},
	\end{align*}
	where $c^*$ and $\lambda^*$ are the optimal values of the SCP\textsubscript{$\scriptscriptstyle N,\hat{N}$}. 
	This increases the likelihood of getting a feasible optimization and gives the best possible lower bound on the safety probability. 
	\label{remark:non_deterministic_verification}
\end{remark}

For the sake of clarity, we present the steps required for applying Theorem~\ref{theo:peyman} in Algorithm~\ref{alg:verification}.

\begin{algorithm}
	\caption{Safety verification of an unknown dt-SS $\mathcal{S} = (X,V_w,w,f)$ using collected data.}
	\SetAlgoLined
	\Ali{
		\textbf{Input:} Confidence parameters $\beta\in[0,1]$ and $\beta_s\in [0,1)$, parameters $\rho \in(0,1]$, $\delta \in \mathbb{R}^+$, $\hat{M} \in \mathbb{R}^+$, $\mathrm{L}_x \in \mathbb{R}^+$, and the degree of barrier certificate $\mathcal{Q}$\\
		\textbf{1:} Compute the number of samples $\hat{N} \ge \hat M/(\delta^2\beta_s)$ to be used for the empirical average (Theorem~\ref{theo:empirical})\\ 
		\textbf{2:} Choose the number of samples $N$\\
		\textbf{3:} Compute $\epsilon=\mathrm I ^{-1}(1-\beta;\mathcal Q+3,N-\mathcal{Q}-2)$\\
		\textbf{4:} Select a probability measure $\pr$ for the \payam{state set} $X$\\
		\textbf{5:} Collect $N\hat N$ state pairs from the system
		$$\payam{\mathcal D = \{(x_i,x^+_{ij})\in X^2,\,\, x^+_{ij} = f(x_i,w_{ij})\}_{i,j}}$$\\
		\textbf{6:} Solve SCP\textsubscript{$\scriptscriptstyle N,\hat{N}$} in \eqref{eq:SCN2} with $\mathcal D$ and obtain the optimal solution $\mathcal{K}^*(\mathcal{D})$
		\\
		\textbf{Output:} 
		If  $\mathcal{K}^*(\mathcal{D})+\mathrm{L}_xG^{-1}(\epsilon) \leq 0 $, then $\pr_w(\mathcal{S}\models_{\mathcal{H}} \Psi)\ge1-\rho$ with a confidence of at least $1-3\beta-\beta_{s}$.
		\\
		\label{alg:verification}}
\end{algorithm}

Both Theorem~\ref{theo:peyman} and Algorithm~\ref{alg:verification} require knowing an upper bound for Lipschitz constant $\mathrm{L}_x$.
The following lemma shows how to get this constant for quadratic barrier certificates and systems with additive noises.
A similar reasoning can be used for other polynomial-type barrier certificates by casting them as quadratic functions of monomials.
\vspace{.2cm}
\begin{lemma}
	Consider a nonlinear system with additive noise
	\begin{align}
		x(t+1)=f_a(x(t))+w(t),\quad t\in\mathbb N_0,
		\label{eq:additive}
	\end{align}
	and a bounded state set $X$ such that $||x||\leq \mathcal{L}$ for all $x\in X$.
	Without loss of generality, we assume that the mean of noise is zero.
	Let $||f_a(x)||\leq L_1||x||+L_2$ and $||\mathbf{J}_x||\leq \hat{L}$ for some $L_1,L_2,\hat L\ge 0, \forall x \in X$, where $\mathbf{J}_x$ is the Jacobian matrix of $f_a(x)$. Given a quadratic barrier function $x^T\mathrm{P}x$ with a symmetric \payam{positive definite} matrix $\mathrm{P}$, the Lipschitz constant $\mathrm{L}_{x}$ can be upper-bounded by 
	$$2\|\mathrm{P}\|(L_1\mathcal{L}\hat L+ L_2 \hat L + \mathcal{L}).$$
	\label{lemma:nonlinearlemma}
\end{lemma}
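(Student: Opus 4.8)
The plan is to bound each of the four Lipschitz constants $\mathrm{L}_{x_1},\mathrm{L}_{x_2},\mathrm{L}_{x_3},\mathrm{L}_{x_5}$ separately and then take their maximum, the point being that the constant associated with $g_5$ dominates. The common building block is the Lipschitz constant of the quadratic barrier itself: since $\nabla_x(x^T\mathrm{P}x)=2\mathrm{P}x$ and $\|x\|\le\mathcal{L}$ on $X$, the mean value inequality gives that $x\mapsto x^T\mathrm{P}x$ is Lipschitz on $X$ with constant at most $2\|\mathrm{P}\|\mathcal{L}$. Because $g_1$, $g_2$, and $g_3$ are, on the regions where their indicators are active, just $\pm\mathrm{B}(x)$ shifted by the decision-variable constants $\mathcal{K}$, $\lambda$, and by $1$, each of $\mathrm{L}_{x_1},\mathrm{L}_{x_2},\mathrm{L}_{x_3}$ is bounded by $2\|\mathrm{P}\|\mathcal{L}$.

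The main work is the bound on $\mathrm{L}_{x_5}$. First I would exploit the additive structure $f(x,w)=f_a(x)+w$ to expand
$$\mathrm{B}(f(x,w)) = (f_a(x)+w)^T\mathrm{P}(f_a(x)+w) = f_a(x)^T\mathrm{P}f_a(x)+2f_a(x)^T\mathrm{P}w+w^T\mathrm{P}w,$$
and then take the conditional expectation. Using $\EE[w]=0$, the cross term vanishes and $\EE[w^T\mathrm{P}w]$ is a constant independent of $x$, so that
$$\EE\big[\mathrm{B}(f(x,w))\mid x\big] = f_a(x)^T\mathrm{P}f_a(x)+\text{const}.$$
Hence only the deterministic map $h(x):=f_a(x)^T\mathrm{P}f_a(x)$ carries the $x$-dependence, and by the chain rule (with $\mathrm{P}$ symmetric) its gradient is $\nabla h(x)=2\mathbf{J}_x^T\mathrm{P}f_a(x)$. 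Bounding this using $\|\mathbf{J}_x\|\le\hat L$ and $\|f_a(x)\|\le L_1\mathcal{L}+L_2$ on $X$ yields $\|\nabla h(x)\|\le 2\hat L\|\mathrm{P}\|(L_1\mathcal{L}+L_2)$, so $h$ is Lipschitz with this constant.

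Finally I would assemble $g_5$, which equals $\EE[\mathrm{B}(f(x,w))\mid x]-\mathrm{B}(x)$ up to $x$-independent terms, via the triangle inequality for Lipschitz constants:
$$\mathrm{L}_{x_5} \le 2\hat L\|\mathrm{P}\|(L_1\mathcal{L}+L_2)+2\|\mathrm{P}\|\mathcal{L} = 2\|\mathrm{P}\|\big(L_1\mathcal{L}\hat L+L_2\hat L+\mathcal{L}\big).$$
Since this exceeds $2\|\mathrm{P}\|\mathcal{L}$, it dominates the other three constants and gives the claimed bound for $\mathrm{L}_{x}=\max\{\mathrm{L}_{x_1},\mathrm{L}_{x_2},\mathrm{L}_{x_3},\mathrm{L}_{x_5}\}$.

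The main obstacle is the $g_5$ step, and the key idea that makes it tractable is that zero-mean additive noise eliminates the cross term and collapses the conditional expectation to a deterministic function plus a constant, after which only a routine gradient estimate remains. One subtlety worth flagging is that converting the gradient-norm bounds into Lipschitz constants through the mean value inequality implicitly uses convexity of $X$ (or at least that the straight segment between any two points of $X$ stays in the region where the growth and Jacobian bounds hold); if $X$ fails to be convex one must either assume this or work with the geodesic distance on $X$.
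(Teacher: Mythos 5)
Your proposal is correct and follows essentially the same route as the paper: expand $\EE[\mathrm{B}(f_a(x)+w)\mid x]$ using the zero-mean noise to drop the cross term and reduce $\EE[w^T\mathrm{P}w]$ to an $x$-independent constant, then bound the gradient of $f_a(x)^T\mathrm{P}f_a(x)-x^T\mathrm{P}x$ via $\|\mathbf{J}_x\|\le\hat L$ and $\|f_a(x)\|\le L_1\mathcal{L}+L_2$, and observe that the resulting constant dominates the $2\|\mathrm{P}\|\mathcal{L}$ bounds for $g_1,g_2,g_3$. The only cosmetic difference is that you split $g_5$ into two Lipschitz pieces and add them, whereas the paper bounds the norm of the full gradient in one step; your remark about needing convexity of $X$ (or a comparable path condition) for the mean-value step is a valid caveat the paper leaves implicit.
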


\begin{proof}
	We first compute the Lipschitz constant of $g_5$ in \eqref{eq:bar3} as
	\begin{equation*}
		L_{x_5}=\max\left\{\left\Vert\frac{\partial{g_5(x)}}{\partial x}\right\Vert,\, x\in X,\; \Vert x\Vert \leq \mathcal{L}\right\},
	\end{equation*}
	where
	\begin{align*}
		g_5(x)\!=&\EE\big[(f^T(x(t))+w^T(t))\mathrm{P}(f(x(t))+w(t))\big]\\
		&-x^T(t)\mathrm{P}x(t) -c\\
		=&f^T(x(t))\mathrm{P}f(x(t)) \!-\! x^T(t)\mathrm{P}x(t) \!+\! \EE\big[w^T(t) \mathrm{P}w(t)\big] \!-\!c. 
	\end{align*}
	By considering $\mathbf{J}_x=[\frac{\partial f}{\partial x_1},\ldots,\frac{\partial f}{\partial x_n}]$, one has
	\begin{align*}
		L_{x_5}& =\underset{x}{\max}\Vert2(f(x(t))^T\mathrm{P}\;\mathbf{J}_x-x^T(t)\mathrm{P})\Vert\nonumber\\
		&\leq\underset{x}{\max}\;\;2 \| f(x(t))^T\| \|\mathrm{P}\| \|\mathbf{J}_x\| + 2\Vert x^T(t)\Vert\|\mathrm{P}\| \\
		& \leq 2(L_1\mathcal{L}+L_2)\|\mathrm{P}\|\hat{L}+ 2\mathcal{L}\|\mathrm{P}\|\\
		& = 2\|\mathrm{P}\|(L_1\mathcal{L}\hat L+ L_2 \hat L + \mathcal{L}).
	\end{align*}
	Similarly, one can readily deduce that $\mathrm{L}_{x_1}=\mathrm{L}_{x_2}=\mathrm{L}_{x_3}=2\mathcal{L}\|\mathrm{P}\|$, and $\mathrm{L}_{x_4}=0$. Then $\mathrm{L}_{x}=\max(\mathrm{L}_{x_1},\mathrm{L}_{x_2},\mathrm{L}_{x_3},\mathrm{L}_{x_4},\mathrm{L}_{x_5})=2\|\mathrm{P}\|(L_1\mathcal{L}\hat L+ L_2 \hat L + \mathcal{L})$, which completes the proof.
\end{proof}

\begin{remark}
	Note that according to the above lemma, computing the upper bound for Lipschitz constant $\mathrm L_x$ depends on $\|\mathrm{P}\|$. On the other hand, computing the entries of $\mathrm{P}$ depends on Lipschitz constant $\mathrm{L}_x$. In order to tackle this circulatory issue, we consider an upper bound for $\|\mathrm{P}\|$ and enforce it as an additional constraint while solving the SCP in \eqref{eq:SCN2}. If there is no solution with the selected upper bound, we iteratively increase the upper bound until we find a solution or a predefined maximum number of iterations is reached.
\end{remark}

\begin{remark}
	\label{rem:linearlip}
	If the underlying dynamics is affine in the form of $x(t+1)=Ax(t)+B+w(t)$ with $A\in \mathbb R^{n\times n}$ and $B\in \mathbb R^{n\times 1}$, we can set $L_1=\hat L$ as an upper bound on $||A||$ and $L_2$ as an upper bound on $\|B\|$.
\end{remark}

\begin{remark}
	\label{rem:data_lip}
	\Ali{The Lipschitz constant in Assumption~\ref{assum:lip} can also be estimated directly from the data using Extreme Value Theory with the estimation approach described in \cite{wood1996estimation}.
		For instance, to estimate the Lipschitz constant of $g_5$ in \eqref{eq:sampledcondition}, we gather data $\left\{(x_{i1},x_{i2})\,|\, i_1, i_2 = 1,\ldots,\tilde{N}\right\}$ and compute 
		\begin{align}
			\hat{\mathrm{L}} =  \max \frac{\norm{g_5(x_{i_1})-g_5(x_{i_2})}}{\norm{x_{i_1}-x_{i_2}}}, \qquad i_1, i_2 \in \{1,\ldots,\tilde{N}\}.
			\label{eq:data_lip}
		\end{align}
		The Lipschitz constant of $g_5$ is computed by fitting a Reverse Weibull distribution to the samples of the random variable $\hat{\mathrm{L}}$, and then computing the location parameter of that distribution.
	}
\end{remark}


\vspace{.2cm}

\section{Data-Driven Controller Synthesis}
\label{sec:synthesis}
In this section, we study the problem of synthesizing a controller for an unknown \payam{stochastic control} system using data to satisfy safety specifications. Our approach is to use \emph{control barrier certificates}, fix a parameterized set of controllers, and design the parameters using an SCP.
The stochastic control system is defined next.
\begin{definition}
	\label{def:mainsys2}
	A discrete-time stochastic control system (dt-SCS) is a tuple $\Ali{\mathcal{S}} = (X,U,V_w,w,f)$, where $X,V_w,w$ are as in Definition~\ref{def:mainsys}, $U\subset \mathbb{R}^m$ is the \payam{input set}, and $f:X\times U\times V_w\rightarrow X$ is the state transition map. The evolution of the state is according to equation
	\begin{align} 
		\Ali{\mathcal{S}}: x(t+1)=f(x(t),u(t),w(t)), \;t\in\mathbb{N}_{0}.
		\label{eq:mainsystem2}
	\end{align}
\end{definition}
\smallskip

We assume that the map $f$ and distribution of $w$ is unknown but we can gather \payam{data $(x_i, u_i, x^+_i)$} by initializing the system at $x_i$, applying the input $u_i$, and observing the next state of the system $x^+_i=x_i(t+1)$. The collected dataset is
\begin{equation}
	\label{eq:data2}
	\Ali{\mathcal{D}}:=\Big\{(x_i,u_i,f(x_i,u_i,w_j))\Big\}_{i,j}\subset X \times U\times X.
\end{equation}
\payam{Now, we state the main problem we are interested to solve here.}
\begin{problem}
	\label{prob:synthesis}
	Consider an unknown dt-SCS $\Ali{\mathcal{S}}$ as in Definition~\ref{def:mainsys2}, with a safety specification $\Psi$ specified by the initial set $X_{in}$, unsafe set $X_u$, and time horizon $\mathcal{H}$. Using a dataset $\Ali{\mathcal{D}}$ of the form \eqref{eq:data2},
	find a \payam{controller $\mathrm{k}:X\rightarrow U$} together with a constant $\rho \in [0,1)$ and confidence $(1-\beta)\in [0,1]$ such that $\Ali{\mathcal{S}}$ under this controller satisfies $\Psi$ with \payam{a probability of at least} $(1-\rho)$, \emph{i.e.,}
	\begin{equation*}
		\pr_w^{\mathrm{k}}\big(\Ali{\mathcal{S}}\models_{\mathcal{H}} \Psi\big)\ge 1-\rho,  \quad\forall x(0)\in X_{in},
	\end{equation*}
	with a confidence $1-\beta$. Moreover, establish a connection between the required size of $\Ali{\mathcal{D}}$ and the confidence $1-\beta$.
\end{problem}





Similar to the verification problem discussed in the previous sections, we use the notion of control barrier certificates with a parameterized set of controllers \cite{jagtap2019formal} to get a characterization of the controller together with the lower bound on the safety probability.
\begin{definition}
	Given a dt-SCS $\mathcal S = (X,U,V_w,w,f)$ with $U\subset\mathbb R^m$, initial set $X_{in}\subset X$, and unsafe set $X_u\subset X$, a function $\Ali{\mathrm{B}}:X\rightarrow\mathbb{R}_{0}^{+}$ is called a control barrier certificate (CBC) for $\mathcal S$ if there exist constants $\lambda > 1$, $c\ge 0$, \Ali{and functions $\mathscr{P}_{\ell}(x):X\rightarrow\mathbb{R}_{0}^{+}$, $\ell\in\{1,2,\ldots,m\}$, such} that \Ali{constraints in \eqref{eq:bar1} and \eqref{eq:bar2} hold, and
		\begin{align}
			&\mathbb{E}\Big[\Ali{\mathrm{B}}(f(x,u,w))\mid x,u\Big]+\sum_{\ell=1}^{m}(u_\ell-\mathscr{P}_{\ell}(x))\leq \Ali{\mathrm{B}}(x)+c\nonumber \\
			&\qquad\qquad\qquad\forall x \in X,\; \forall u = \payam{[u_1;\ldots;u_m] \in U.}
			\label{eq:baru33}
		\end{align}
		\label{def:barrier_cc}}
\end{definition}

\begin{theorem}
	\label{thm:CBS}
	A CBC $\Ali{\mathrm{B}}(x)$ as in Definition~\ref{def:barrier_cc} guarantees that
	\begin{equation*}
		\pr_w^{\mathrm{k}}\big(\mathcal{S}\models_{\mathcal{H}} \Psi\big)\ge 1-\rho,  \quad\forall x(0)\in X_{in},
	\end{equation*}
	\payam{under the controller $\mathrm{k}(x) = [\mathscr{P}_1(x);\mathscr{P}_2(x);\ldots;\mathscr{P}_m(x)]$, where $\rho = (1+c\mathcal{H})/\lambda$} with $\mathcal{H}$ being the time horizon of the safety specification.
\end{theorem}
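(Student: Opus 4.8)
The plan is to reduce the control setting to the verification setting already resolved by Theorem~\ref{theo:kushner}. The key observation is that the CBC conditions \eqref{eq:baru11}--\eqref{eq:baru33} of Definition~\ref{def:barrier_cc}, once the specific controller $\mathrm{k}(x)=[\mathscr{P}_1(x);\ldots;\mathscr{P}_m(x)]$ is substituted, collapse exactly onto the barrier-certificate conditions \eqref{eq:bar1}--\eqref{eq:bar3} of Definition~\ref{def:barrier} for the induced closed-loop system. Thus no new probabilistic machinery is required; the statement follows by recognizing $\mathrm{B}_u$ as an ordinary barrier certificate for a suitable autonomous dt-SS.

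First I would form the closed-loop system. Plugging $u(t)=\mathrm{k}(x(t))$ into \eqref{eq:mainsystem2} yields $x(t+1)=\tilde{f}(x(t),w(t))$ with $\tilde{f}(x,w):=f(x,\mathrm{k}(x),w)$. Provided $\mathrm{k}$ is measurable and maps $X$ into $U$, the tuple $\tilde{\mathcal{S}}=(X,V_w,w,\tilde{f})$ is a dt-SS in the sense of Definition~\ref{def:mainsys}, and the trajectory distribution of $\tilde{\mathcal{S}}$ under $\pr_w$ coincides with that of $\mathcal{S}_u$ under the controller $\mathrm{k}$ governed by $\pr_w^{\mathrm{k}}$. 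Hence Theorem~\ref{theo:kushner} becomes applicable to $\tilde{\mathcal{S}}$.

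Next I would verify that $\mathrm{B}_u$ is a (non-negative) barrier certificate for $\tilde{\mathcal{S}}$ with the same constants $\lambda>1$ and $c\ge 0$. Conditions \eqref{eq:bar1} and \eqref{eq:bar2} are literally \eqref{eq:baru11} and \eqref{eq:baru22} and hold verbatim. For the supermartingale-type inequality \eqref{eq:bar3}, I would evaluate the universally quantified constraint \eqref{eq:baru33} at the admissible input $u=\mathrm{k}(x)$, i.e. $u_\ell=\mathscr{P}_\ell(x)$ for every $\ell$. The correction term then vanishes, $\sum_{\ell=1}^{m}(u_\ell-\mathscr{P}_\ell(x))=0$, leaving $\EE[\mathrm{B}_u(f(x,\mathrm{k}(x),w))\mid x]\le \mathrm{B}_u(x)+c$, which is exactly \eqref{eq:bar3} for $\tilde{f}$. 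It is essential here that \eqref{eq:baru33} is quantified over all $u\in U$, so that the evaluation at $u=\mathrm{k}(x)$ is legitimate precisely when $\mathrm{k}(x)\in U$.

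Finally I would invoke Theorem~\ref{theo:kushner} on $\tilde{\mathcal{S}}$ with barrier $\mathrm{B}_u$ and constants $\lambda,c$ to obtain $\pr_w^{\mathrm{k}}(\mathcal{S}_u\models_{\mathcal{H}}\Psi)\ge 1-(1+c\mathcal{H})/\lambda=1-\rho$ for every $x(0)\in X_{in}$, where the last equality is the definition $\rho=(1+c\mathcal{H})/\lambda$. The main difficulty is not analytical but a matter of admissibility bookkeeping: one must guarantee that $\mathrm{k}$ is measurable, so that $\tilde{f}$ is a valid measurable transition map, and that $\mathrm{k}(x)\in U$ for all $x$, so that substituting $u=\mathrm{k}(x)$ into the universally quantified constraint \eqref{eq:baru33} is permitted and Theorem~\ref{theo:kushner} carries over without modification.
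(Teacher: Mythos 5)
Your proposal is correct and is exactly the intended argument: the paper states Theorem~\ref{thm:CBS} without proof (deferring to the cited work on control barrier certificates), and the standard justification is precisely your reduction --- evaluate the universally quantified condition \eqref{eq:baru33} at $u=\mathrm{k}(x)$ so the correction term $\sum_{\ell}(u_\ell-\mathscr{P}_\ell(x))$ vanishes, recognize $\mathrm{B}_u$ as a barrier certificate for the closed-loop dt-SS, and invoke Theorem~\ref{theo:kushner}. Your admissibility caveat that $\mathrm{k}(x)\in U$ must hold is well placed; the paper enforces exactly this later through the polytope constraints $g_6,\ldots,g_{5+q}$ in \eqref{eq:sampledconditionn}.
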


\medskip

\Ali{Let us consider polynomial-type CBC and controllers. The number of CBC coefficients is denoted by $\mathcal{Q}$.}
Polynomial $\mathscr{P}_\ell$ has the following form for some $k'\in\mathbb{N}_0$: 
\begin{align}
	\mathscr{P}_\ell(p^\ell,x)=\sum_{\iota_1=0}^{k'}\ldots\sum_{\iota_n=0}^{k'}p^\ell_{\iota_1,\ldots,\iota_n}(x_1^{\iota_1}\ldots x_n^{\iota_n}),
\end{align} 
with $p^\ell_{\iota_1,\ldots,\iota_n}=0$ for $\iota_1+\ldots+\iota_n>k'$.

The overall number of all coefficients of $m$ polynomials $\mathscr{P}_\ell(p^\ell,x)$ is denoted by $\mathcal{P}$.
We also assume that the input set $U$ is a polytope of the form
\begin{equation}
	U  =  \left\{u\in\mathbb R^m\,|\,\mathcal{A}u\leq\mathsf{b}\right\},
	\label{eq:input_boundedness}
\end{equation}
for some $\mathcal{A}\in\mathbb{R}^{q \times m}$ and $\mathsf{b} \in \mathbb{R}^{q \times 1}$.

Under these assumptions, the inequalities in Definition~\ref{def:barrier_cc} and Theorem~\ref{thm:CBS} can be written as an RCP:
%
\begin{align} 
	\text{RCP}:\left\{
	\begin{array}{ll}
		\underset{d}{\min}\quad \mathcal{K}\quad \quad\\
		\text{s.t.}\quad\;
		\max_z \;\;g_{z}(x,u,d) \leq 0,\\
		\qquad \;\;\;z \in \{1,2,\ldots,5+q\}, \forall x \in X,\forall u \in U,\\
		\quad \quad\;\;\;d=[\mathcal{K};\lambda;c;b_{\iota_1,\ldots,\iota_{n}};p^\ell_{\iota_1,\ldots,\iota_{n}}],\\
		\qquad ~~\mathcal{K}\in\mathbb{R}, \; \lambda >1, \;c \geq 0,\\
	\end{array}
	\right.
	\label{eq:RCPP}
\end{align}
where \Ali{ $g_z(x,d), z\in\{1,\ldots,4\}$, are the same as \eqref{eq:sampledcondition}, and
	\begin{align}
		&g_5(x,u,d)=\mathbb{E}\Big[\Ali{\mathrm{B}}(b,f(x,u,w))\mid x,u\Big]+\sum_{\ell=1}^{m}(u_\ell-\mathscr{P}_{\ell}(p^\ell,x))\nonumber \\&-\Ali{\mathrm{B}}(b,x)-c-\mathcal{K},\nonumber\\
		&[g_6(x,d);\ldots;g_{5+q}(x,d)]\!=\!\;\mathcal{A}\;[\mathscr{P}_1(p^1,x);\ldots;\!\mathscr{P}_m(p^m,x)]\!\;-\nonumber\\
		&\!\mathsf{b}\!\;-\!\;\mathcal{K}\mathbf{1}_{q\times1}.
		\label{eq:sampledconditionn}
\end{align} }
Note that the last inequality in \eqref{eq:sampledconditionn} encodes the fact that the control input should be inside the set $U$ specified by the polytope \eqref{eq:input_boundedness}.


\Ali{The constraints} in the RCP is always feasible. A solution can be  constructed as follows.
Set the coefficients of $\Ali{\mathrm{B}(b,x)}$ and $\mathscr{P}_\ell(p^\ell,x)$ equal to zero, $c=0$, $\lambda=2$, and $u_{\ell}=\mathscr{P}_{\ell}(p^\ell,x)\;\forall \ell\in\{1,\ldots,m\}$. Also select $\mathcal{K}$ large enough such that $\mathcal{K}\geq \frac{1}{\rho}-2$ together with $\mathcal{K}\;\mathbf{1}_{m \times 1}\geq -\mathsf{b}$.

\payam{The RCP in \eqref{eq:RCPP} is in general hard to solve since the map $f$ and the probability measure $\pr_w$ are unknown.} Hence, similar to the verification approach discussed in Section~\ref{sec:datadriven}, \payam{we assign a probability distribution to both state and input sets}, and collect $N$ i.i.d pairs $(x_i,u_i)$ from this assigned distribution, and replace the robust quantifiers $\forall{x} \in X$ and $\forall{u} \in U$ with $\forall{x_i} \in X$ and $\forall{u_i} \in U, i\in\{1,\ldots,N\}$, respectively. This results in a scenario convex program called SCP\textsubscript{$\scriptscriptstyle N$}, which is not presented here for the sake of brevity.

To address the issue of unknown $f$ and $\pr_w$, the expectation in $g_5$ is replaced with its empirical approximation by sampling $\hat N$ i.i.d. values $w_j,\; j \in \{1,\ldots,\hat{N}\}$, from $\pr_w$ for each pair of $(x_i,u_i)$, which \payam{results in} the following scenario convex program denoted by SCP\textsubscript{$\scriptscriptstyle N,\hat{N}$}:

\begin{align} 
	\text{SCP\textsubscript{$\scriptscriptstyle N,\hat{N}$}}:\left\{
	\begin{array}{ll}
		\underset{d}{\min}\quad \mathcal{K}\quad \quad\\
		\text{s.t.}\quad\;
		\max_z\; \bar{g}_{z}(x_i,u_i,d) \leq 0,\\~\qquad \; z \in \{1,2,\ldots,5+q\}, \\
		\qquad ~~\forall x_i \in X, \;\forall u_i \in U, \forall i \in \{1,\ldots,N\},\\
		\quad \quad\;\;\;d=[\mathcal{K};\lambda;c;b_{\iota_1,\ldots,\iota_{n}};p^{\ell}_{\iota_1,\ldots,\iota_{n}}],\\
		\qquad ~~\mathcal{K}\in\mathbb{R}, \; \lambda >1, \;c \geq 0,\\
	\end{array}
	\right.
	\label{eq:SCP22}
\end{align}
where $\bar{g}_z:=g_z$ for all $z \in \{1,2,\ldots,5+q\}\setminus\{5\}$, and 
\begin{align}
	&\bar{\cg}_5(x_i,u_i,d)=\frac{1}{\hat{N}}\sum_{j=1}^{\hat{N}}\Ali{\mathrm{B}}(b,f(x_i,u_i,w_j))\;+\nonumber\\
	&\sum_{\ell=1}^{m}(u_{{i}_\ell}-\mathscr{P}_{\ell}(p^\ell,x_i))-\;\Ali{\mathrm{B}}(b,x_i)-c+\delta-\mathcal{K}.
	\label{eq:empiricall}
\end{align}
Using empirical approximation introduces an error which is demonstrated by $\delta$ in the above optimization problem.
We denote by $\hat{\mathrm{B}}_u(b,x\,|\,\Ali{\mathcal{D}})$ the constructed control barrier certificate with coefficients computed by solving the SCP\textsubscript{$\scriptscriptstyle N,\hat{N}$}.
\begin{remark}
	Similar to Theorem~\ref{theo:empirical}, under the assumption
	$$\text{Var}\big(\Ali{\mathrm{B}}(b,f(x,u,w))\big)\leq \hat{M},$$
	for some $\hat{M}>0$,
	a desired confidence $\beta_s\in(0,1]$, and an error $\delta$, one has
	\begin{align}
		\pr_w^{\mathrm{k}} \Big(\hat{\mathrm{B}}_u(b,x\mid\Ali{\mathcal{D}}) \models\text{SCP\textsubscript{$\scriptscriptstyle N$}}\Big)\geq 1- \beta_s,
		\label{eq:feasibilityy}
	\end{align}
	provided that $\hat{N} \geq \frac{\hat{M}}{\delta^2\beta_s}$.
	\label{remark:empirical_inputt}
\end{remark}
\Ali{To provide the main results here, we need the following assumptions.}
\begin{assumption}
	\label{assum:lipp}
	Function $g_{5}$ is Lipschitz continuous with respect to $(x,u)$ with Lipschitz constant $\mathrm{L}_{5}$. 
	\Ali{Functions $g_1,g_2,g_3,g_6,\ldots, g_{5+q}$ are also Lipschitz continuous with respect to $x$ with Lipschitz constants $\mathrm{L}_1,\mathrm{L}_2,\mathrm{L}_3,\mathrm{L}_6,\ldots,\mathrm{L}_{5+q}$, respectively. Then, the Lipshitz constat of maximum of these function is $\mathrm{L}_1+\mathrm{L}_2+\mathrm{L}_3+\mathrm{L}_{5}+\mathrm{L}_6+\ldots+\mathrm{L}_{5+q}$. Furthermore, if all functions $g$ are analytic over a compact domain $X \times U$, the Lipschitz constant of their maximum is $ \max (\mathrm{L}_1,\mathrm{L}_2,\mathrm{L}_3,\mathrm{L}_{5}, \mathrm{L}_6,\ldots,\mathrm{L}_{5+q}$), which we denote it by $\mathrm L_{x,u}$.}
\end{assumption}
\Ali{
	\begin{assumption}
		\label{ass:G2}
		There is a strictly increasing function $G(r):\mathbb{R}^+ \rightarrow [0,1]$ such that
		\begin{align}
			\pr[\mathrm{b}(x,u,r)]\geq G(r)\qquad \forall (x,u) \in X \times U,
			\label{eq:gofr2}
		\end{align}
		where $\mathrm{b}(x,u,r)$ is an open ball in the product space $X\times U$ centered at the point $(x,u)$ with radius $r$.
	\end{assumption}
}
Now, we have all the ingredients to propose the main results here.
\begin{theorem}
	Consider an unknown dt-SCS as in Definition~\ref{def:mainsys2} and a safety specification $\Psi$. Let \Ali{Assumptions~\ref{assum:lipp}--\ref{ass:G2} hold with constant $\mathrm{L}_{x,u}$ and function $G(r)$.} Suppose that $\mathcal{K}^*(\Ali{\mathcal{D}})$ is the optimal value of SCP\textsubscript{$\scriptscriptstyle N,\hat{N}$} in \eqref{eq:SCP22} \Ali{with number of samples $N$, a given $\rho \in (0,1]$, and} for $\hat{N}$ selected based on Remark~\eqref{remark:empirical_inputt} with confidence of $1-\beta_s$. \Ali{
		Suppose
		\begin{align}
			\mathcal{K}^*(\mathcal{D})+\mathrm L_{x,u}G^{-1}(\epsilon)\leq 0,
			\label{eq:main_condition_2}
		\end{align}
		where function $G$ is defined in \eqref{eq:gofr2} and $\epsilon=\mathrm I ^{-1}(1-\beta;\mathcal Q+\ \mathcal P+3,N-\mathcal{Q}-\mathcal P-2)$ with confidence parameter $\beta \in [0,1]$, and $\mathcal{Q}$ and $\mathcal{P}$ being respectively the number of coefficients of the polynomial control barrier certificate and the overall number of coefficients of polynomials $\mathscr{P}_{\ell}(p^\ell,x)$ for $m$ inputs.
		Then, the following statement is valid with a confidence of at least $1-3\beta-\beta_s$: the system $\Ali{\mathcal{S}}$ together with the constructed control input 
		$$\mathrm{k}(x):=[\mathscr{P}_1(p^1,x);\ldots;\mathscr{P}_m(p^m,x)],$$
		for which coefficients $p^\ell, \ell \in\{1,\ldots,m\}$, are obtained from the solution of SCP\textsubscript{$\scriptscriptstyle N,\hat{N}$}, is safe within the time horizon $\mathcal{H}$ with a probability of at least $1-\rho$, i.e.,
		\begin{align}
			&\pr^{\mathrm{k}}_w\big(\Ali{\mathcal{S}}\models_{\mathcal{H}} \Psi \big)\ge1-\rho.
			\label{RCC}
		\end{align}
	}
	\label{theo:peyman_contt}
\end{theorem}

\begin{proof}
	\Ali{The proof is similar to the proof of Theorem~\ref{theo:peyman} by replacing  $\pr_w$ with $\pr_w^\mathrm k$ for the RCP \eqref{eq:RCPP} and its associated SCPs. The function $G(\epsilon)$ is defined as in \eqref{eq:gofr2}.
		The number of coefficients is $\mathcal Q + \mathcal P + 3$ where $\mathcal P$ is the overall number of coefficients of $m$ polynomials defining the controller, which results in the new arguments of the regularized incomplete beta function $\mathrm I$ in the theorem statement.} 
\end{proof}

\begin{corollary}
	\label{rem:gofepsiloninput}
	\Ali{If samples are collected uniformly from a hyper rectangular sets $X$ and $U$, respectively, with edges of length $\eta_x(i)$ and $\eta_u(j)$ in each dimension $i$ and $j$, then one can compute $G(\epsilon)$ as $\frac{a\epsilon^{n+m}}{\prod_{i=1}^{n} \eta_x(i)\prod_{j=1}^{m} \eta_u(j)}$
		, where $a=\frac{1}{2^{n+m}}\frac{\pi^{\frac{n+m}{2}}}{\Gamma(\frac{n+m}{2}+1)}$ with Gamma function defined in Corollary~\ref{rem:gofepsilon}. 
	} 
\end{corollary}
\proof{The proof is similar to the proof of Corollary \ref{rem:gofepsilon} in \ref{sec:proof_g} based on the new definition of $G(r)$ in Assumption~\ref{ass:G2}}.

\begin{remark}
	When $\rho$ is not fixed, one can eliminate constraint $g_4$ from \eqref{eq:RCPP} and directly provide the following inequality
	\begin{align*}
		\pr_w^\mathrm k(\Ali{\mathcal{S}}\models_{\mathcal{H}} \Psi)\geq 1-\frac{1+c^*\mathcal{H}}{\lambda^*},
	\end{align*}
	in which $c^*$ and $\lambda^*$ are the optimal solutions of SCP\textsubscript{$\scriptscriptstyle N,\hat{N}$} in \eqref{eq:SCP22}. This increases the likelihood of getting a feasible solution and gives the best possible lower bound on the safety probability for $\Ali{\mathcal{S}}$. \Ali{A schematic overview of our synthesis approach is presented in Fig.~\ref{fig:controller}.}
\end{remark}

\begin{algorithm}
	\SetAlgoLined
	\Ali{
		\textbf{Input:} Confidence parameters $\beta\in [0,1]$, $\beta_s\in (0,1]$,
		parameters $\rho\in(0,1]$, $\delta \in \mathbb{R}^+$, $\hat{M} \in \mathbb{R}^+$, $\mathrm{L}_{x,u} \in \mathbb{R}^+$, degree of the barrier certificate $\mathcal{Q}$, and degree of the polynomial functions for the controller $\mathcal{P}$\\
		\textbf{1:} Compute the number of samples $\hat{N} \ge \hat M/(\delta^2\beta_s)$ for the empirical average (Remark~\ref{remark:empirical_inputt})\\
		\textbf{2:} Choose the number of samples $N$\\ 
		\textbf{3:} Compute $\epsilon=\mathrm I ^{-1}(1-\beta;\mathcal Q+\ \mathcal P+3,N-\mathcal{Q}-\mathcal P-2)$\\
		\textbf{4:} Select a probability measure $\pr$ for the state-input set $(X,U)$\\
		\textbf{5:} Collect $N\hat N$ tuples from the system $\mathcal D := \{(x_i,u_i,x'_{ij})\in X\times{U}\times{X},\,\, x'_{ij} = f(x_i,u_i,w_{ij})\}_{i,j}$\\
		\textbf{6:} Solve SCP\textsubscript{$\scriptscriptstyle N,\hat{N}$} in \eqref{eq:SCP22} with $\mathcal D$ and obtain the optimal solution $\mathcal{K}^*(\Ali{\mathcal{D}})$\\
		\textbf{Output:} 
		If  $\mathcal{K}^*(\Ali{\mathcal{D}})+\mathrm L_{x,u}G^{-1}(\epsilon) \leq 0 $, then $\pr_w^{\mathrm{k}}(\Ali{\mathcal{S}}\models_{\mathcal{H}}  \Psi)\ge1-\rho$ with a confidence of at least $1-3\beta-\beta_{s}$ and with the controller $\mathrm{k}(x):=[\mathscr{P}_1(p^1,x);\ldots;\mathscr{P}_m(p^m,x)]$.
	}
	\caption{Data-driven synthesis for safety specification on an unknown dt-SCS $\Ali{\mathcal{S}} = (X,U,V_w,w,f)$.}
	\label{alg:synthesiss}
\end{algorithm}

\begin{figure}[ht]
	\centering 
	\includegraphics[width=.5
	\linewidth]{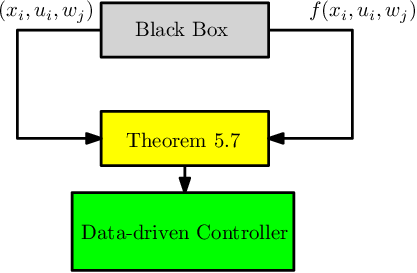}
	\caption{A schematic overview of the data-driven synthesis presented in Section~\ref{sec:synthesis}.}
	\label{fig:controller}
\end{figure}
\payam{Next lemma provides an upper bound for Lipschitz constant $\mathrm{L}_{x,u}$, which is required in Theorem~\ref{theo:peyman_contt}, in the case that the system is affected by an additive noise.} 
\begin{lemma}
	Consider a nonlinear dt-SCS as in Definition~\ref{def:mainsys2} which is affected by an additive noise as the following:
	\begin{align}
		x(t+1)=f_a(x(t),u(t))+w(t),
		\label{eq:additive_u}
	\end{align}
	and a bounded state set $X$ and input set $U$ such that $\|x\|\leq \mathcal{L}_x$ for all $x \in X$, and $\|u\|\leq \mathcal{L}_u$ for all $u \in U$. Without loss of generality, we assume that the mean of the noise is zero. Let $\|f_a(x,u)\|\leq L_1\|x\|+L_2\|u\|+L_3$, $||\mathbf{J}_x||\leq \hat{L}_x$, and  $||\mathbf{J}_u||\leq \hat{L}_u$, for some $\mathcal{L}_x,\mathcal{L}_u,L_1, L_2,L_3,\hat{L}_x, \hat{L}_u \geq 0$, where $\mathbf{J}_x$ and  $\mathbf{J}_u$ are Jacobian matrices of $f_a(x,u)$ with respect to $x$ and $u$, respectively. Given a quadratic barrier function $x^T\mathrm{P}x$, and a set of quadratic functions $x^T\mathrm{P}_\ell x,\; \ell \in \{1,\ldots,m\}$, \payam{representing each} of $\mathscr{P}_\ell(p^\ell,x)$ with symmetric matrices $\mathrm{P}$ and $\mathrm{P}_\ell$, the Lipschitz constant $\mathrm{L}_{x,u}$ can be upper-bounded by  $\sqrt{\mathscr{L}_x^2+\mathscr{L}_u^2}\;$, where
	\begin{align}
		\label{eq:boundss}
		\mathscr{L}_{x}& =2\mathcal{L}_xL_1\hat{L}_x\|\mathrm{P}\|+2\mathcal{L}_uL_2\hat{L}_x\|\mathrm{P}\|+2L_3\hat{L}_x\|\mathrm{P}\|\nonumber\\
		&\;\;\;+\mathcal{L}_x\|\mathrm{P}\|+\mathcal{L}_x\sum_{\ell=1}^{m}\|\mathrm{P}_\ell\|,\\
		\mathscr{L}_{u}
		&=2\mathcal{L}_xL_1\hat{L}_u\;\|\mathrm{P}\|+2\mathcal{L}_uL_2\hat{L}_u\;\|\mathrm{P}\|+2L_3\hat{L}_u\;\|\mathrm{P}\|+\sqrt{m}.\nonumber
	\end{align} 
	\label{lemma:nonlinearlemma2}
\end{lemma}

\begin{proof}
	We first compute the Lipschitz constant regarding $\cg_5(x,u,d)$ in \eqref{eq:sampledconditionn}, 
	where
	\begin{align}
		\cg_5(x,u,d)&=~\EE\big[(f^T(x(t),u(t))+w^T(t))\mathrm{P}(f(x(t),u(t))+\nonumber\\&w(t))\big]+\sum_{\ell=1}^{m}(u_\ell-\mathscr{P}_{\ell}(p^\ell,x))-x^T(t)\mathrm{P}x(t) -c\nonumber. 
	\end{align}
	Considering $\EE[w(t)]=0$, we compute the upper bounds for Lipschitz constant with respect to $x$ and $u$ separately denoted by $\mathrm{L}_{5_x}$
	and $\mathrm{L}_{5_u}$, respectively. We define $\mathbf{J}_x=[\frac{\partial f}{\partial x_1},\ldots,\frac{\partial f}{\partial x_n}]$ and $\mathbf{J}_u=[\frac{\partial f}{\partial u_1},\ldots,\frac{\partial f}{\partial u_m}]$ as Jacobian matrices with respect to $x$ and $u$, respectively. 
	\begin{align*}
		\mathrm{L}_{5_x} =& \max\limits_{x,u}\Vert\frac{\partial{\cg_5(x,u,d)}}{\partial x }\Vert=\underset{x,u}{\max}\;\Vert2(f(x(t),u(t))^T\mathrm{P}\;\mathbf{J}_{x}\nonumber\\& -x^T(t)\mathrm{P}-x^T(t)\sum_{\ell=1}^{m}\mathrm{P}_\ell\Vert\nonumber\\
		\le & 2\mathcal{L}_xL_1\hat{L}_x\|\mathrm{P}\|+2\mathcal{L}_uL_2\hat{L}_x\|\mathrm{P}\|+2L_3\hat{L}_x\|\mathrm{P}\|+\nonumber\\&\mathcal{L}_x\|\mathrm{P}\|+\mathcal{L}_x\sum_{\ell=1}^{m}\|\mathrm{P}_\ell\|,
	\end{align*}
	and accordingly,
	\begin{align*}
		\mathrm{L}_{5_u}
		& =\max\limits_{x,u}\Vert\frac{\partial{\cg_5(x,u,d)}}{\partial u }\Vert\\
		& =\Vert2(f(x(t),u(t))^T\mathrm{P}\mathbf{J}_{u}+\mathbf{1}_m\Vert\\
		& \leq 2\mathcal{L}_xL_1\hat{L}_u\;\|\mathrm{P}\|+2\mathcal{L}_uL_2\hat{L}_u\;\|\mathrm{P}\|+2L_3\hat{L}_u\;\|\mathrm{P}\|+\sqrt{m}.
	\end{align*}
	Now it can be deduced that $$\mathrm{L}_5\leq\sqrt{\mathrm{L}_{5_x}^2+\mathrm{L}_{5_u}^2}.$$
	Similar to the proof \payam{of} Lemma~\ref{lemma:nonlinearlemma}, it is straightforward to compute the upper bounds of Lipschitz constants for other constraints in \eqref{eq:sampledconditionn} and show that the computed upper bound is greater than all of them. We ignore this part for the sake of brevity. Then, $\mathrm{L}_{x,u}\leq\max\big(\mathrm{L}_i,i\in\{1,2,\ldots,5+q\}\setminus\{4\}\big)=\sqrt{\mathrm{L}_{5_x}^2+\mathrm{L}_{5_u}^2}$ which is equivalent to $\sqrt{\mathscr{L}_x^2+\mathscr{L}_u^2}$ with $\mathscr{L}_x$ and $\mathscr{L}_u$ \payam{as} in \eqref{eq:boundss}.
\end{proof}
\Ali{Note that one can use similar results as in Remark~\ref{rem:data_lip} to estimate the Lipschitz constant via data.}
\section{Data-driven Barrier Certificates for Non-convex Setting}
\label{sec:datadriven2}
In this section, we extend the proposed result in Section~\ref{sub:main_safe} \payam{to a case of having} non-convex constraints. We modify the constraint \eqref{eq:bar3} in Definition~\ref{def:barrier} as follows:
\begin{align}
	\EE\Big[\mathrm{B}(f(x,w))\mid x\Big]\leq~\kappa\;\mathrm{B}(x)+c,\quad\forall x \in X,
	\label{eq:bar3nonconvex}
\end{align}
where $\kappa \in (0,1)$.

According to the fundamental results in \cite{kushner1967stochastic}, choosing $\kappa$ in the interval $(0,1)$ provides a better lower bound for the probability of safety satisfaction in \eqref{safety_kushi}, namely:
\begin{align*}
	\pr_w\big(\mathcal{S}\models_{\mathcal{H}} \Psi\big)\ge 1-\rho,
\end{align*}
with
\begin{align}
	\rho = \begin{cases}
		1-(1-\frac{1}{\lambda})(1-\frac{c}{\lambda}) \qquad\qquad\qquad\qquad\text{if $\lambda \ge \frac{c}{\kappa}$}\\
		\frac{1}{\lambda}(1-\kappa)^{\mathcal{H}}+\frac{c}{\kappa \lambda}\big(1-(1-\kappa)^{\mathcal{H}}\big)  \;\qquad\;\;\text{if $\lambda < \frac{c}{\kappa}$},
	\end{cases}
	\label{eq:new_bound}
\end{align}
where parameters $c$, $\lambda$, and $\mathcal{H}$ are the same as in Definition~\eqref{def:barrier}.
Another advantage of choosing $\kappa$ in the interval $(0,1)$ is that this new formulation can be \payam{utilized} in the context of compositionality and interconnected systems \cite{Zamani.2017b,SZ.19}.

Replacing the last condition of RCP in \eqref{eq:sampledcondition} with the modified constraint in \eqref{eq:bar3nonconvex} leads to the following optimization problem which is not convex anymore:
\begin{align} 
	\text{RP}:\left\{
	\begin{array}{ll}
		\underset{d}{\min}\quad \mathcal{K}\quad \quad\\
		\text{s.t.}
		\quad \max_z\big(g_z(x,d)\big) \!\leq\! 0, z \!\in\!\{1,\dots,4\},\forall x \!\in\! X,\\
		\quad ~~ \;\;d=[\mathcal{K};\lambda;c;b_{\iota_1,\ldots,\iota_n};\kappa],\\
		\quad ~~ \;\;\mathcal{K}\in \mathbb{R},\; \lambda > 1,\; c \geq 0,\; \kappa \in (0,1),
	\end{array}
	\right.
	\label{eq:RCP_non}
\end{align}
\Ali{in which $g_z(x,d), z \in \{1,2,3\}$, are the same as in \eqref{eq:sampledcondition}, and}
\begin{align}
	&g_4(x,d)=\EE\Big[\mathrm{B}(f(x,w))\mid x\Big]\leq~\kappa\;\mathrm{B}(x)+c,\quad\forall x \in X.
	\label{eq:sampledcondition_non}
\end{align}

The non-convexity comes from the multiplication of $\kappa$ and coefficients of barrier function $\mathrm{B}(b,x_i)$ in \eqref{eq:bar3nonconvex}. With the same reasoning in Section~\eqref{sec:datadriven}, solving the above RP is not straightforward generally. Therefore, we construct an SP by taking samples and then connect the solution of the obtained scenario programming to the safety of the stochastic system in \eqref{eq:mainsystem}. \payam{By collecting} i.i.d. samples $x_i,\; i \in \{1,\ldots,N\}$, from an assigned probability distribution over the state set, and approximating the expectation term in \eqref{eq:bar3nonconvex} results in a non-convex programming as the following:
\begin{align} 
	\text{SP\textsubscript{$\scriptscriptstyle N,\hat{N}$}}\!:\!\left\{
	\begin{array}{ll}
		\underset{d}{\min}\quad \mathcal{K}\quad \quad\\
		\text{s.t.}\quad\;
		\max_z\,\bar g_z(x_i,d) \!\leq\! 0, \,\,\forall i\in \{1,\ldots,N\},\\
		\qquad\qquad ~\;\;z \!\in\!\{1,\ldots,4\},\\
		\qquad ~~d=[\mathcal{K};\lambda;c;b_{\iota_1,\ldots,\iota_n};\kappa],\\
		\qquad ~~\mathcal{K}\in \mathbb{R},\; \lambda > 1,\; c \geq 0,\; \kappa \in (0,1), 
	\end{array}
	\right.
	\label{eq:SCN2_non}
\end{align}
where $\bar g_z:=g_z$ for all $z\in\{1,2,3\}$ and
\begin{align}
	\bar{g}_4(x_i,d)=\frac{1}{\hat{N}}\sum_{j=1}^{\hat{N}}\mathrm{B}(b,f(x_i,w_j))-  \;\kappa\;\mathrm{B}(b,x_i)-c+\delta-\mathcal{K}.
	\label{eq:scp_non}
\end{align}
Note that in this new scenario programming, we eliminated the constraint that forces a fixed \payam{probability} lower bound $1-\rho$ on the safety of the stochastic system, namely, $g_4$ in \eqref{eq:sampledcondition}. Instead, we are interested in providing the tightest possible lower bound of the safety probability according to Remark~\ref{remark:non_deterministic_verification}.
The main issue underlying here is that by considering $\kappa \in(0,1)$, the obtained scenario program is not convex anymore, and accordingly, one cannot naively utilize the results proposed in Theorems~\ref{theo:peyman}. \payam{Hence}, one cannot solve the SP in \eqref{eq:SCN2_non} \payam{by simply applying bisection over} $\kappa,$ while still utilizing the proposed results in the previous sections. 

Now we state the main problem we aim to address in this section.
\begin{problem}\label{problem_nonconvex}
	Consider an unknown dt-SS $\mathcal{S}$ as in Definition~\ref{def:mainsys}. Compute the \payam{largest} lower bound $(1-\rho) \in [0,1]$ on the probability of satisfying $\Psi$, \emph{i.e.,}
	\begin{equation*}
		\pr_w\big(\mathcal{S}\models_{\mathcal{H}} \Psi\big)\ge 1-\rho,
	\end{equation*}
	according to \eqref{eq:new_bound} together with a confidence $(1-\beta) \in [0,1]$ using a dataset $\mathcal{D}$ of the form \eqref{eq:data}. Moreover, establish a connection between the required size of dataset $\mathcal{D}$, the cardinality of \payam{the} set from which the parameter $\kappa$ is selected, and the desired confidence $1-\beta$.
\end{problem}



In the next theorem, we present our solution to Problem~\ref{problem_nonconvex} by proposing a new confidence bound which is always valid even for the non-convex scenario program in \eqref{eq:SCN2_non}.

\payam{
	\begin{theorem}\label{Thm3}
		Consider an unknown dt-SS as in \eqref{eq:mainsystem} together with the safety specification $\Psi$. Let $\mathrm{M}$ be the cardinality of a
		finite set from \Ali{which 
			$\kappa$ takes value in (0,1). Suppose that Assumptions~\ref{assum:lip}-\ref{ass:G} hold for the RP in \eqref{eq:RCP_non} with function $G(\cdot)$ and $\mathrm{L}_{x}:=\operatorname{max}\big(\mathrm{L}_{x_1},\mathrm{L}_{x_2},\mathrm{L}_{x_3},\mathrm{L}_{x_4}\big)$, where $\mathrm{L}_{x_i}, i \in\{1,\ldots,4\},$ is an upper bound on the Lipschitz constant of the $i^{th}$ constraint in \eqref{eq:RCP_non}. 
			Assume $\hat{N}$ is selected for the SP\textsubscript{$\scriptscriptstyle N,\hat{N}$} similar to Theorem~\ref{theo:empirical} in order to provide confidence $1-\beta_s$. Suppose $\mathcal{K}^\ast(\mathcal{D})$ is the optimal value of the optimization problem in \eqref{eq:SCN2_non} using $\hat{N}$ and $N$. Furthermore, $\epsilon=\mathrm I^{-1}(1-\mathrm M\beta;\mathcal Q+3,N-\mathcal Q-2)$ for $\beta \in [0,1]$, where $\mathcal{Q}$ is the number of coefficients of the barrier certificate. 
			Then the following statement holds with a confidence of at least $1-3\beta-\beta_s$: if $\mathcal{K}^\ast(\mathcal{D})+\mathrm L_xG^{-1}(\epsilon) \leq 0$, then
			\begin{align}
				\pr_w(\mathcal{S}\models_{\mathcal{H}} \Psi)\geq 1-\rho^\ast,
			\end{align}
			where $\rho ^\ast$ is computed as in \eqref{eq:new_bound} using optimal solutions of SP\textsubscript{$\scriptscriptstyle N,\hat{N}$}, namely, $c^ \ast$, $\lambda^ \ast$, and $\kappa^\ast$. More importantly, with a confidence of at least $1-3\beta-\beta_s$, $\mathrm B(b^\ast,x)$ is a barrier certificate for $S$, satisfying \eqref{eq:bar1}, \eqref{eq:bar2}, and \eqref{eq:bar3nonconvex}, where $b^\ast$ is the optimal solution of SP\textsubscript{$\scriptscriptstyle N,\hat{N}$}.}
	\end{theorem}
}
\begin{proof}
	Denote the optimal values of the RP and its equivalent scenario programming before the empirical approximation of the expectation term in $g_4$, namely, SP\textsubscript{$\scriptscriptstyle N$}, by $\mathcal K^\ast$ and $\mathcal K^\ast_{\mathsf m}(\mathcal D)$, respectively. \Ali{Similar to \eqref{eq:proof1}, one has
		\begin{align*}
			\pr\big(\mathcal{K}^*\leq\mathcal{K}^\ast_{\mathsf m}(\mathcal D)+\mathrm{L}_xG^{-1}(\epsilon)\big)\geq 1-3\beta,
		\end{align*}
		for any $N\geq \tilde{N}\big(\epsilon_1,\ldots,\epsilon_{\mathrm{M}},\beta\big)$,
		where
		\begin{align*}
			&\tilde{N}\big(\epsilon_1,\ldots,\epsilon_{\mathrm{M}},\beta\big):=\nonumber\\&  \min \Big\{N\in\mathbb{N}\mid\sum_{z=1}^{\mathrm{M}}\sum_{i=0}^{\mathrm{d}-1} \dbinom{N}{i}\epsilon_z^{\;i}
			(1-\epsilon_z)^{N-i} \leq \beta \Big\}.
		\end{align*}
		Alternatively, one can set $\epsilon:=\epsilon_1 = \epsilon_2 = \ldots = \epsilon_{\mathrm{M}}$ in the above expression to get the inequality $\epsilon \leq \mathrm I^{-1}(1-\mathrm M\beta;\mathrm d,N-\mathrm d +1)$, where $\mathrm{M}$ is the cardinality of the set \payam{from which} $\kappa$ is selected, and $\mathrm{d}$ is the number of decision variables. By choosing $\mathrm{d}:=\mathcal{Q}+3$, one gets the parameters of the incomplete beta function in the theorem statement.
		On the other hand, due to the particular selection of $\hat N$ and $\beta_s$ similar to Theorem~\ref{theo:empirical}, it can be deduced that
		\begin{equation*}
			\label{eq:feasibility}
			\pr_w\Big(\hat{\mathrm{B}}(b,x\,|\,\mathcal{D}) \models\text{SP\textsubscript{$\scriptscriptstyle N$}}\Big)\geq 1- \beta_s,
		\end{equation*}
		where $\hat{\mathrm{B}}(b,x\,|\,\mathcal{D})$ is the barrier function whose coefficients are the optimal solution of SP\textsubscript{$\scriptscriptstyle N$}. Therefore, we have
		\begin{equation}
			\label{eq:optimal_compare_u}
			\pr\left(\mathcal{K}^\ast_{\mathsf m}(\mathcal D)\le \mathcal K^\ast(\mathcal D)\right)\ge 1-\beta_s.
		\end{equation}
		By defining events
		$\mathcal A  := \{\mathcal D\,|\, \mathcal{K}^*\!\leq\mathcal{K}^\ast_{\mathsf m}(\mathcal D)+\mathrm L_xG^{-1}(\epsilon)\}$,
		$\mathcal B := \{\mathcal D\,|\,\mathcal{K}^\ast_{\mathsf m}(\mathcal D)\le \mathcal K^\ast(\mathcal D)\}$,
		and $\mathcal C :=\{\mathcal D\,|\,\mathcal K^\ast(\mathcal D)+\mathrm L_xG^{-1}(\epsilon)\le 0\}$,
		where $\pr(\mathcal A)\ge 1-3\beta $ and  $\pr(\mathcal B)\ge 1-\beta_s$, it is easy to conclude using the same reasoning as in the second part of proof of Theorem~\eqref{theo:peyman} that
		\begin{align*}
			\pr(\mathcal{K}\leq 0)\geq 1-3\beta-\beta_s,
		\end{align*}
		which ensures safety of the stochastic system with a lower bound $1-\rho$ and a confidence of at least $1-3\beta-\beta_s$.
	}
\end{proof}

\section{Numerical Examples}
\label{sec:case_study}

\Ali{The simulations of this section are performed on an iMac 3.5 GHz Quad-Core Intel Core i7. The optimizations are solved by CVX Toolbox \cite{cvx} with Mosek \cite{andersen2000mosek} as the solver.}

\subsection{Temperature verification for three rooms}
Consider a temperature regulation problem for three rooms characterized by the following discrete-time stochastic system:
\begin{align}
	\mathrm{T}_1(t+1)=&\big(1-\tau_s(\alpha+\alpha_e)\big)\mathrm{T}_1(t)+\tau_s\alpha\mathrm{T}_2(t)+\nonumber \\&\tau_s\alpha_eT_e+w_1(t)\nonumber \\
	\mathrm{T}_2(t+1)=&\big(1-\tau_s(2\alpha+\alpha_e)\big)\mathrm{T}_2(t)+\tau_s\alpha(\mathrm{T}_1(t)+\mathrm{T}_3(t))+ \nonumber\\&\tau_s\alpha_eT_e+w_2(t) \nonumber \\
	\mathrm{T}_3(t+1)=&\big(1-\tau_s(\alpha+\alpha_e)\big)\mathrm{T}_3(t)+\tau_s\alpha\mathrm{T}_2(t)+\nonumber \\&\tau_s\alpha_eT_e+w_3(t),
	\label{eq:tem_3}
\end{align}
where $\mathrm{T}_1(t)$, $\mathrm{T}_2(t)$, and $\mathrm{T}_3(t)$ are temperatures of three rooms, respectively. \payam{Terms} $w_1(t)$, $w_2(t)$, and $w_3(t)$ are additive zero-mean Gaussian noises with standard deviations of $0.01$, which model the environmental uncertainties. Parameter $T_e = 10 \degree C$ is the ambient temperature. Constants $\alpha_e = 8\times 10^{-3}$ and $\alpha=6.2\times10^{-3}$ are heat exchange coefficients between rooms and the ambient, and individual rooms, respectively. The model for each room is adapted from \cite{girard2016safety} discretized by $\tau_s=5$ minutes. Let us consider the regions of interest for each room as $X_{in} = [17\degree C,18\degree C]$, $X_{u} = [29\degree C,30\degree C]$, and $X = [17\degree C,30\degree C]$. We assume the model of the system and the distribution of the noise are unknown. 
The main goal is to verify whether the temperature of each room remains in the comfort zone $[17,29]$ for the time horizon $\mathcal{H}=3$ which is equivalent to $15$ minutes, with a priori confidence of $99\%$.

Let us consider a barrier certificate with degree $k=2$ in the polynomial form as $[T1;T2;T3]^T\mathrm{P}[T1;T2;T3]=b_0T_1^2+b_1T_2^2+b_2T_3^2+b_3T_1T_2+b_4T_1T_3+b_5T_2T_3+b_6T_1+b_7T_2+b_8T_3+b_9$,
where
\begin{align}
	\mathrm{P}=\begin{bmatrix}
		b_0 & \frac{b_3}{2} & \frac{b_4}{2}&\frac{b_6}{2}\\
		\frac{b_3}{2} & b_1 & \frac{b_5}{2}&\frac{b_7}{2}\\
		\frac{b_4}{2}&\frac{b_5}{2}&b_2&\frac{b_8}{2}\\
		\frac{b_6}{2}&\frac{b_7}{2}&\frac{b_8}{2}&b_9
	\end{bmatrix}.
	\label{eq:firstpmatrix}
\end{align}
According to Algorithm \ref{alg:verification}, we first choose the desired confidence parameters $\beta$ and $\beta_s$ as $\frac{0.005}{3}$ and $0.005$, respectively. The value of empirical approximation error is selected as $\delta=0.05$. We choose $\rho=0.2$. \Ali{The Lipschitz constant is computed as $1.5$ according to Remark~\ref{rem:data_lip}.
	By enforcing $\hat{M}=0.005$, the required number of samples for the approximation of the expected value in \eqref{eq:SCN2} is $\hat{N}=400$. Now, we solve the scenario problem SCP\textsubscript{$\scriptscriptstyle N,\hat{N}$} with the number of samples $N=6 \times 10^6$ and the computed $\hat{N}=400$, which gives us the optimal objective value $\mathcal{K}^*(\mathcal{D}) = -0.46$. The computation time is about $5$ minutes. For $N=6\times 10^6$ and $\beta=\frac{0.005}{3}$, $\epsilon $ is computed as $4.36\times 10^{-6}$. Function $G^{-1}(\epsilon)$ is also computed as $16.09\epsilon^{\frac{1}{3}}$ according to Corollary \ref{rem:gofepsilon}.} 

\Ali{Since $\mathcal{K}^*(\mathcal{D})+\mathrm{L}_xG^{-1}(\epsilon) =-0.066\leq0$, according to Theorem~\ref{theo:peyman}, one can conclude:
	\begin{align*}
		\pr_w(\mathcal{S}\models_{3} \Psi)\geq 1-\rho=0.80, 
	\end{align*}
	with a confidence of at least $1-3\beta-\beta_{s}=0.99$. The barrier certificate constructed from solving SCP\textsubscript{$\scriptscriptstyle N,\hat{N}$} is as follows:
	\begin{align}
		\hat{\mathrm{B}}(b,T_1,T_2,T_3\mid&\mathcal{D})=0.112T_1^2+0.112T_2^2
		+0.112T_3^2\nonumber\\&-0.004T_1T_2-0.005T_1T_3-0.002T_2T_3\nonumber\\&-3.761T_1-3.815T_2-3.803T_3+99.93.
	\end{align}
}
\Ali{The computed optimal values for $c$ and $\lambda$ are $0.627$ and $14.872$, respectively.} The scatter plot of the obtained barrier certificate is illustrated in Fig.~\ref{fig:scatter}. As can be seen in this figure, the barrier certificate has less values in the initial set while it has larger values in the unsafe region.

\Ali{
	We remark that the conservatism of our approach is originating from two sources. (a) The first one is that we are using barrier certificates for computing the lower bound. A barrier certificate with a fixed template (polynomial of a certain degree) gives a lower bound that could have a gap with the best lower bound on the safety probability. (b) Our sampling approach requires making the optimization more conservative to account for going from robust programs over continuous (uncountable) domains to a scenario program with finite number of samples.  
	If one assumes that the model is known in this case study, the synthesized barrier certificate has the parameters $c = 0.9767$ and $\lambda = 31.51$. This gives the lower bound $0.875$ on the safety probability.
	Therefore, our approach provides a more conservative lower bound $0.80$ since it assumes no knowledge of the model.
}
\begin{figure}[ht]
	\centering 
	\includegraphics[width=.7
	\linewidth]{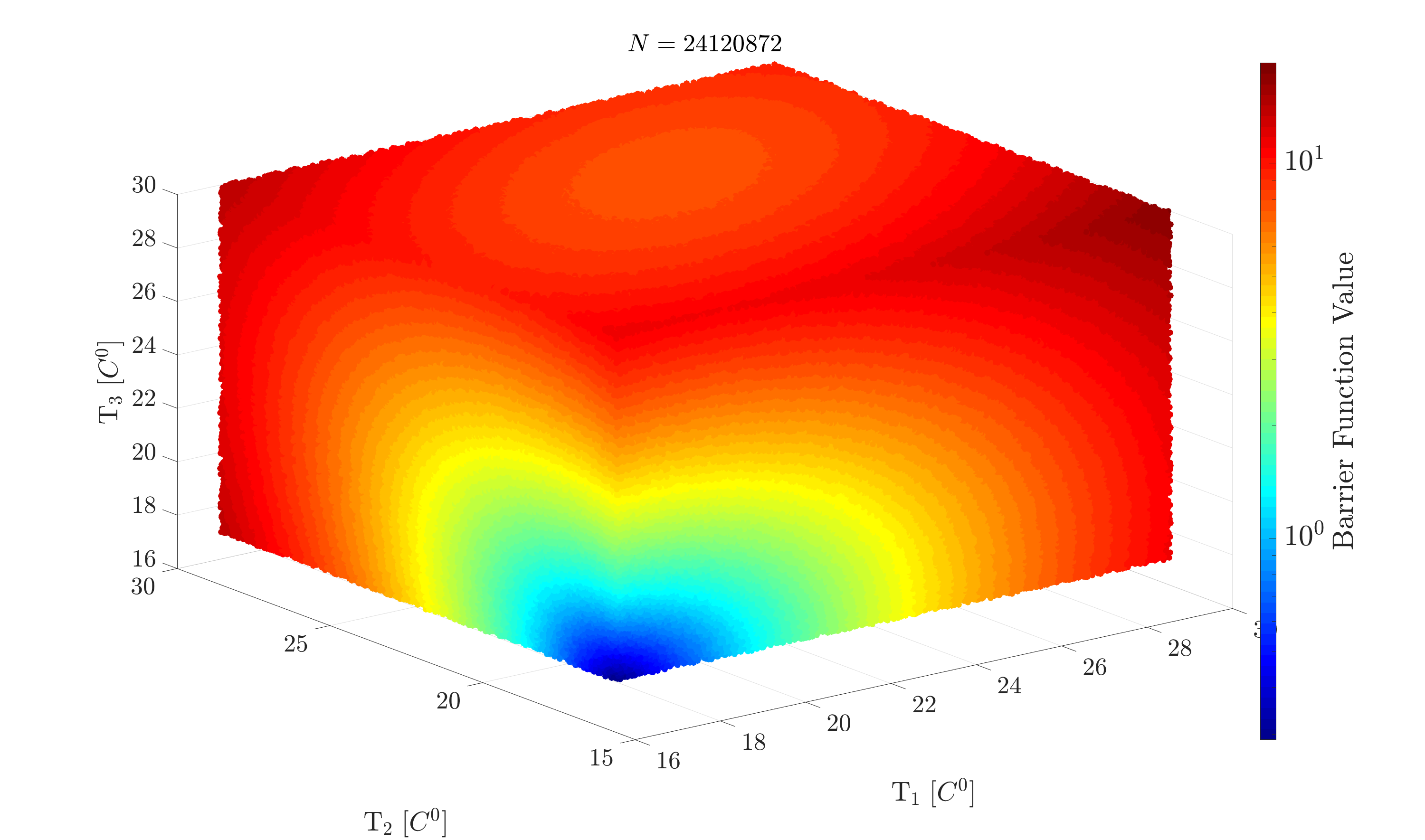}
	\caption{\payam{Scatter plotting of the barrier certificate indicating portions of the state set where the inequalities in \eqref{eq:SCN2} are enforced for $6\times 10^6$ sampled data.}}
	\label{fig:scatter}
\end{figure}
\subsection{Lane keeping system}
Lane keeping assist system is a future development of the modern lane departure warning system embedded in the current vehicles. This system usually assists the driver through electronic assistance with the steering force. The characteristics of this support depends on the distance of the vehicle from the edge of the lane among other factors such as uncertainties\cite{Anu:2013}. One of the key challenges in such assisting systems is verifying the obtained performance which can be defined as a safety problem. 

In this subsection, it is supposed that the model of the vehicle and the distribution of noise are unknown, and one only has access to a finite number of samples. This unknown system is characterized by a simplified kinematic single-track model of BMW320i which is adapted from \cite{althoff2017commonroad} by discretization of the model and adding noise to imitate the uncertainties.

The nonlinear stochastic difference equation is as follows:
\begin{align}
	&x(t+1) = x(t) + \tau_sv\;\cos(\psi(t)+\mathrm{b}) + w_1(t)\nonumber \\
	\mathcal{S}:\;
	& y(t+1) = y(t) + \tau_sv\;\sin(\psi(t)+\mathrm{b}) + w_2(t)\nonumber \\
	& \psi(t+1) = \psi(t) + \frac{ \tau_s v}{l_{r}}\sin(\mathrm{b})+w_3(t),
\end{align}
where $\mathrm{b}=\frac{l_r}{l_r+l_f}\tan^{-1}(\delta_f)$ with $\delta_f=5$ degrees as the steering angle. \payam{Parameters} $l_r=1.384$ and $l_f=1.384$ are the distances between the center of gravity of the vehicle to the rear and front axles, respectively. Variables $x$, $y$, and $\psi$ denote horizontal movement, vertical movement, and the heading angle, respectively. This system is considered to be affected by zero-mean additive noises $w_1$, $w_2$, and $w_3$ which are related to uncertainties of position $x$, position $y$, and the heading angle $\psi$ with standard deviation of $0.01$, $0.01$, and $0.001$ respectively. Other parameters are the sampling time $(\tau_s = 0.1s)$, and the velocity $(v=5m/s)$.

The state set is considered as $X = [1,10] \times [-7,7] \times [-0.05,0.05]$. The regions of interest are $X_{in} =[1,2]\times [-0.5,0.5] \times [-0.005,0.005]$, $X_{u_1} = [1,10] \times [-7,-6]\times [-0.05,0.05] $, and $X_{u_2} = [1,10] \times [6,7]\times [-0.05,0.05]$. 
Now, the goal is to verify if the vehicle does not enter the unsafe regions of the lane for the time horizon of $\mathcal{H}=3$ or equivalently $0.3\;s$ with a desired confidence of $90\%$.

We consider a barrier certificate of degree $k=2$ in the polynomial form as
$[x;y;\psi]^T\mathrm{P}[x;y;\psi]=b_0x^2+b_1y^2+b_2\psi^2+b_3xy+b_4x\psi+b_5y\psi+b_6x+b_7y+b_8\psi+b_9$, where the matrix $\mathrm{P}$ is \payam{as in \eqref{eq:firstpmatrix}}.

\Ali{We follow Algorithm \ref{alg:verification} to find the barrier certificate and providing a probabilistic guarantee on the safety of stochastic system. First, the desired confidence parameters $\beta$ and $\beta_s$ are chosen as $\frac{.095}{3}$ and $0.005$, respectively. We also select the empirical approximation error $\delta=0.02$. The desired lower bound of safety probability is selected as $1-\rho=0.80$. The Lipschitz constant is computed as $\mathrm{L}_x=10$ according to Remark \ref{rem:data_lip}. 
	By enforcing $\hat{M}=0.006$, the required number of samples for the approximation of the expected value in \eqref{eq:SCN2} is $\hat{N}=3000$. Now, we solve the scenario problem SCP\textsubscript{$\scriptscriptstyle N,\hat{N}$} with an arbitrary sample number $N=6\times 10^6$ and $\hat{N}$ which gives us the optimal value $\mathcal{K}^*(\mathcal{D}) = -0.4518$. The computation time is about $5$ minutes. For those values of samples $N$ and $\beta$, $\epsilon$ is computed as $3.41\times 10^{-6}$. Using Corollary~\ref{rem:gofepsilon}, $G^{-1}(\epsilon)$ is computed as $2.92\epsilon^{\frac{1}{3}}$.} 

\Ali{Since $\mathcal{K}^*(\mathcal{D})+2.92\;\mathrm{L}_x\epsilon^{\frac{1}{3}}=-0.01\leq0$, according to Theorem~\ref{theo:peyman}, one can deduce that}
\begin{align*}
	\pr_w(\mathcal{S}\models_{3} \Psi)\geq 1-\rho=0.80, 
\end{align*}
\Ali{with a confidence of at least $1-3\beta-\beta_{s}=90\%$. The barrier certificate constructed from solving SCP\textsubscript{$\scriptscriptstyle N,\hat{N}$} is represented as:
	\begin{align}
		\hat{\mathrm{B}}(b,x,y,\psi\mid\mathcal{D})=&0.39y^2+0.15\psi^2
		+0.009x\psi\nonumber\\&-0.007y\psi-0.015\psi+0.452.
\end{align}}
\begin{figure}[t]
	\centering 
	\includegraphics[width=.7
	\linewidth]{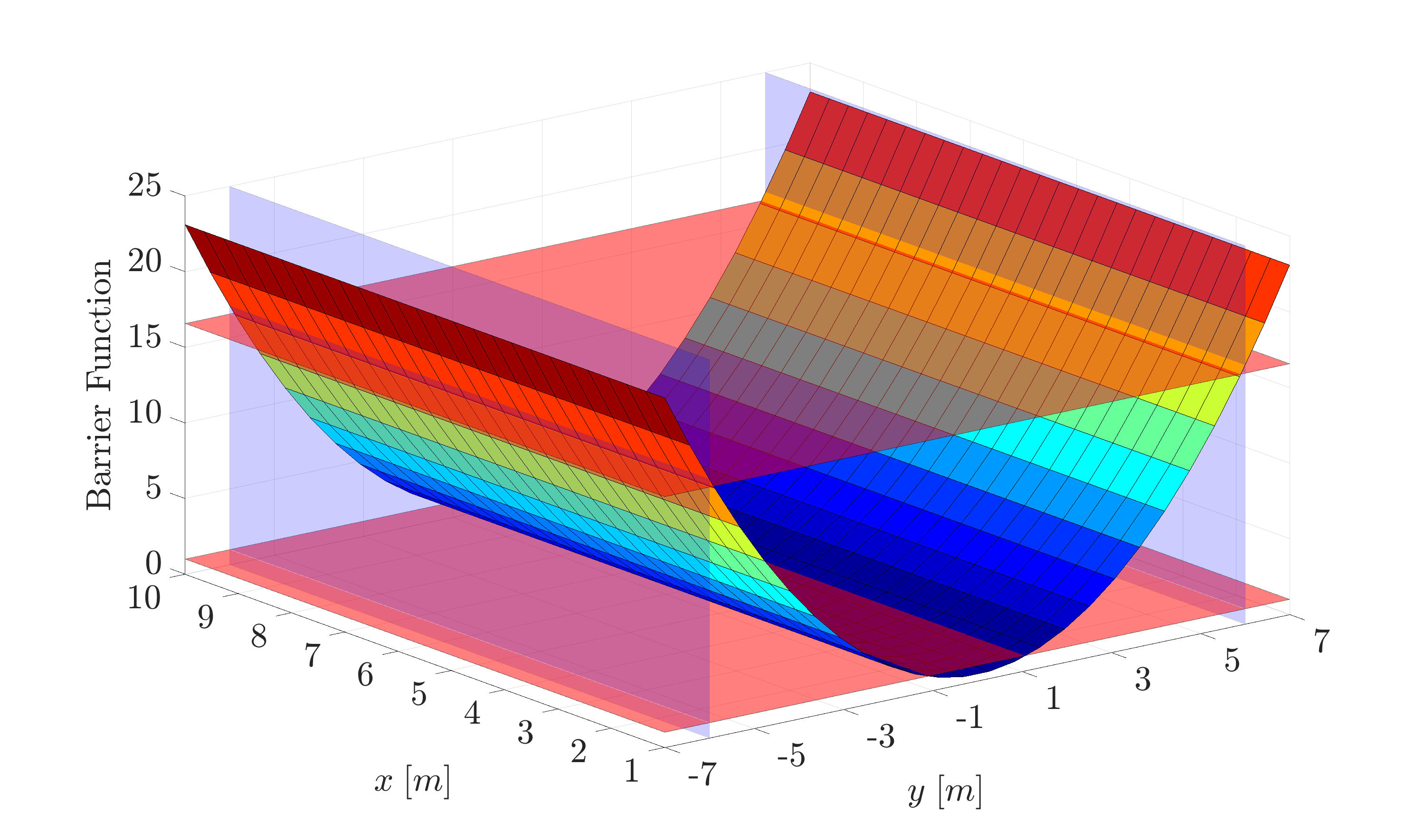}
	\caption{Surface plot of the barrier certificate $\mathrm{B}(x,y,\psi)$ with respect to $x$ and $y$ for fixed $\psi=0$.}
	\label{fig:vehicle_1}
\end{figure}
\begin{figure}[b]
	\centering 
	\includegraphics[width=.7
	\linewidth]{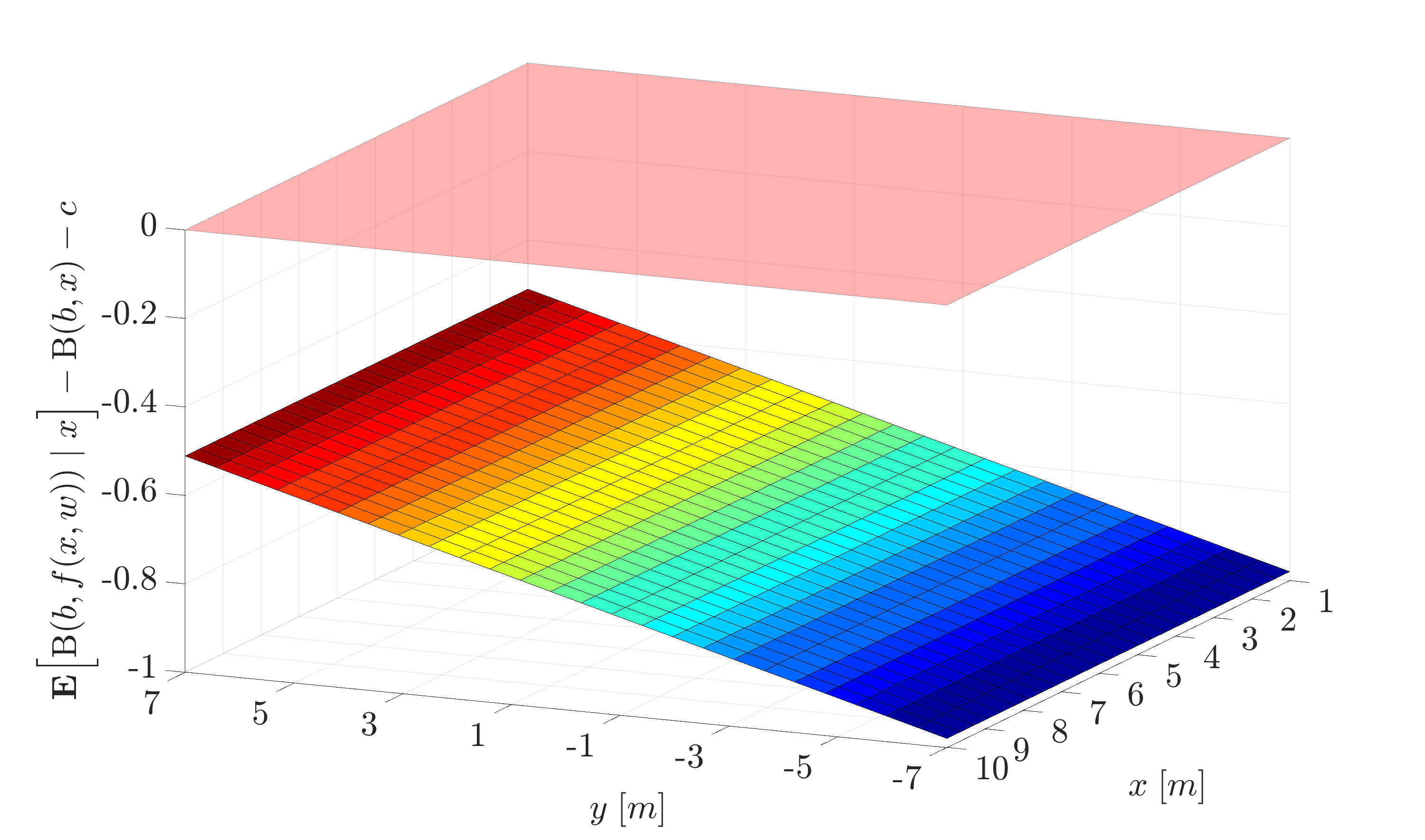}
	\caption{Satisfaction of the third condition in Definition~\ref{def:barrier} (for $\psi=0$) $\mathrm{B}(x,y,\psi)$ based on collected data.}
	\label{fig:vehicle_2}
\end{figure}
\Ali{The optimal values of $c$ and $\lambda$ are $0.57$ and $14.04$, respectively. The exact value of the coefficients are reported in the appendix.} 


The surface plot of the barrier certificate $\mathrm{B}(x,y,\psi)=\hat{\mathrm{B}}(b,x,y,\psi\mid\mathcal{D})$ with respect to $x$ and $y$ for a fixed value of $\psi=0$ is depicted in Fig.~\ref{fig:vehicle_1}. The blue transparent planes separate unsafe region on $y$, while the lower and upper red transparent planes demonstrate the thresholds in constraints \eqref{eq:bar1} and \eqref{eq:bar2}, respectively. Satisfaction of the first and second condition of barrier certificate in Definition~\ref{def:barrier} can be observed in Fig.~\ref{fig:vehicle_1}. The satisfaction of the third condition is illustrated in Fig.~\ref{fig:vehicle_2}.

\subsection{Synthesizing a temperature controller}
Consider a temperature regulation problem for a room using a heater characterized by 
\begin{align}
	\Ali{\mathcal{S}}:\; \mathrm{T}(t+1)=&\mathrm{T}(t) + \tau_s\big(\alpha_e(T_e-\mathrm{T}(t))+\nonumber\\&\alpha_h(T_h-\mathrm{T})u(t)\big)+w(t),
	\label{eq:tem_1}
\end{align}
where $w(t)$ is a zero-mean Gaussian noise with standard deviation of $0.05$. Parameters are $T_e= 15$, $T_h= 45$, $\alpha_e=8\times 10^{-3}$, $\alpha_h=3.6\times 10^{-3}$, and $\tau_s=5$. Regions of interest are defined as $X_{in}=[22\degree C,23\degree C]$, $X_{u_1} = [27\degree C,28\degree C]$, $X_{u_2} = [16.5\degree C,17.5\degree C]$, and $X = [16.5\degree C,28\degree C]$. The input region is $[0,1]$.
We assume that the model of the system and the distribution of the noise are unknown. 
The main goal is to design a controller that forces the temperature to remain in the comfort zone $[17.5,27]$ for the time horizon $\mathcal{H}=60$, which is equivalent to $300$ minutes, with a priori confidence of $95\%$.

Let us fix a control barrier certificate with degree $k=4$ in the polynomial form as $T^T\mathrm{P}T=b_0T^4+b_1T^3+b_2T^2+b_3T+b_4$ with $b_0, b_1,b_2,b_3,b_4 \in \mathbb{R}$. The structure of the controller is considered to be a polynomial of degree $k'=4$ as $u(p^1,T)=T^T\mathrm{P}_uT=p_0T^4+p_1T^3+p_2T^2+p_3T+p_4$. Matrices $\mathrm{P}$ and $\mathrm{P}_u$ can be represented as:
\begin{align}
	\mathrm{P}=\begin{bmatrix}
		b_0 & \frac{b_1}{2} & \frac{b_2}{3}\\
		\frac{b_1}{2} &\frac{b_2}{3}& \frac{b_3}{2}\\
		\frac{b_2}{3}&\frac{b_3}{2}&b_4
	\end{bmatrix},
	\mathrm{P}_u=\begin{bmatrix}
		p_0 & \frac{p_1}{2} & \frac{p_2}{3}\\
		\frac{p_1}{2} &\frac{p_2}{3}& \frac{p_3}{2}\\
		\frac{p_2}{3}&\frac{p_3}{2}&p_4
	\end{bmatrix}.
\end{align}
According to Algorithm \ref{alg:synthesiss}, we first choose the desired confidences $\beta$ and $\beta_s$ as $\frac{0.005}{3}$ and $0.045$ respectively. We also select the approximation error $\delta=2$. \Ali{The Lipschitz constant $\mathrm{L}_{x,u}$ is computed as $12$ according to Remark~\ref{rem:data_lip}.
	By considering $\hat{M}=1.5\times10^5$, the required number of samples for the approximation of the expected value in \eqref{eq:SCN2} is $\hat{N}=833330$. Now, we solve the scenario problem SCP\textsubscript{$\scriptscriptstyle N,\hat{N}$} with the selected number of samples $N=1.5\times 10^6$ and $\hat{N}$ which gives us the optimal value $\mathcal{K}^*(\mathcal{D}) = -0.41$. The computation time is about $2$ minutes. For $N=1.5 \times 10^6$ and $\beta=\frac{0.005}{3}$, value of $\epsilon$ is computed as $1.7424\times 10^{-5}$. Using Corollary~\ref{rem:gofepsiloninput}, $G^{-1}(\epsilon)$ is computed as $4.91\epsilon^{\frac{1}{2}}$.}

\Ali{Since $\mathcal{K}^*(\mathcal{D})+\mathrm{L}_{x,u}G^{-1}(\epsilon) = -0.164 \leq 0$, one has
	\begin{align*}
		\pr_w^{\mathrm{p}}(\mathcal{S}\models_{60} \Psi)\geq 1-\rho=0.80, 
	\end{align*}
	with a confidence of at least $1-3\beta-\beta_{s}=95\%$. The computed values for $\lambda$ and $c$ are $4817$ and $16.04$, respectively. The control barrier certificate constructed from solving SCP\textsubscript{$\scriptscriptstyle N,\hat{N}$} is:
	\begin{align*}
		\hat{\mathrm{B}}(b,T\mid\mathcal{D}) =& \;11.89\;T^4-1.07\times 10^3\;T^3+3.61\times 10^{4}\;T^2\\&-5.42 \times 10^{5}+3.05 \times 10^6.
	\end{align*}
	The obtained controller is:
	\begin{align*}
		\mathscr{P}_1(p^1,T\mid\mathcal{D}) =& \;1.45\times 10^{-5}T^3+0.012T^2+0.355.
	\end{align*}
	The temperature trajectories for $15$ different realizations of noise from three different initial temperature in the range $[22\degree,23\degree]$ is illustrated in Fig.~\ref{fig:controllertmp}. 
	As can be seen, the temperature in the collected trajectories do not enter the unsafe set, which is in gray color. We also ran the system to get $10^4$ trajectories, all of them remain safe. This confirms the theoretical lower bound computed by our approach. 
}
\begin{figure}[ht]
	\centering 
	\includegraphics[width=.7
	\linewidth]{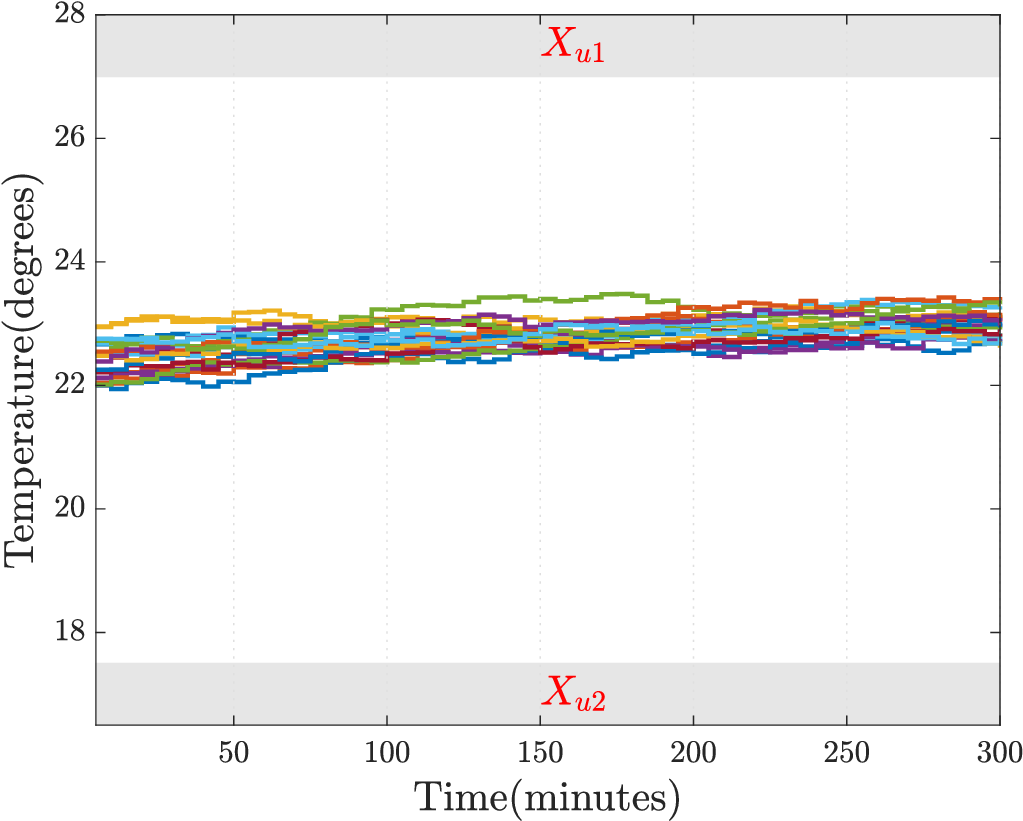}
	\caption{The temperature trajectories of $15$ different realizations of noise for three different initial temperature in the range $[22\degree,23\degree]$.}
	\label{fig:controllertmp}
\end{figure}

The conservativeness of our approach in terms of the safety bound $(1-\rho)$ and the number of samples is shown in Table~\ref{table:conservative}. 
The values are reported for increasing number of samples and two safety thresholds with $\rho \in\{0.1,0.2\}$.
As can be seen from the table, increasing the number of samples makes $\epsilon$ smaller and reduces the term $\mathrm{L}_{x,u}G^{-1}(\epsilon)$ used in \eqref{eq:main_condition_2}. In contrast, the values of $\mathcal{K}^*(\mathcal{D})$ become larger. This creates a tradeoff between the two terms in \eqref{eq:main_condition_2}. Note that the condition of having a negative value for $\mathrm{L}_{x,u}G^{-1}(\epsilon) + \mathcal{K}^*(\mathcal{D})$, thus guaranteeing safety with probability $(1-\rho)$, is only satisfied in the last two row of the table for $\rho = 0.2$ (indicated in blue color). Also, notice that the satisfaction of \eqref{eq:main_condition_2} for a higher desired safety probability requires larger number of samples.

\begin{table*}
		\centering
	\caption{Conservativeness of the proposed approach}
	\label{table:conservative}
	\begin{tabular}{||c c c c c c c||} 
		\hline
		Number of samples  & Computed $\epsilon$  & $\mathrm{L}_{x,u}G^{-1}(\epsilon)$ & $\mathcal{K}_{\rho=0.2}^*(\mathcal{D})$ & $\mathcal{K}_{\rho=0.1}^*(\mathcal{D})$ & $\mathcal{K}_{\rho=0.2}^*(\mathcal{D})+\mathrm{L}_{x,u}G^{-1}(\epsilon)$ & $\mathcal{K}_{\rho=0.1}^*(\mathcal{D})+\mathrm{L}_{x,u}G^{-1}(\epsilon)$ \\ [0.5ex] 
		\hline\hline
		$10^3$ & $0.026$   & $9.5$ & $-0.48$ & $-0.45$ & \textcolor{black}{$9.02$} & \textcolor{black}{$9.05$}\\ 
		\hline
		$10^4$ & $0.003$  & $3$ & $-0.42$ & $0.09$ & \textcolor{black}{$2.58$} & \textcolor{black}{$3.09$}\\ 
		\hline
		$10^5$ & $2.61 \times 10^{-4}$  & $0.952$ & $-0.43$ &$1.37$ & \textcolor{black}{$0.522$} & \textcolor{black}{$2.32$}\\ 
		\hline
		$1.5 \times 10^6$ & $1.74 \times 10^{-5}$ & $0.246$ & $-0.41$ &$ 2.08$ & \textcolor{blue}{$-0.164$} & \textcolor{black}{$2.33$}\\ 
		\hline
		$3 \times 10^6$ & $8.71 \times 10^{-6}$  & $0.174$ & $-0.35$ & $2.09$ & \textcolor{blue}{$-0.176$} & \textcolor{black}{$2.26$}\\ [1ex] 
		\hline
	\end{tabular}
\end{table*}

\section{Conclusion}
\label{sec:ref}
\payam{
	We proposed a formal verification and synthesis procedure for discrete-time continuous-space stochastic systems with unknown dynamics against safety specifications. Our approach is based on the notion of barrier certificate and uses sampled trajectories of the unknown system.
	We first casted the computation of the barrier certificate as a robust convex program (RCP) and approximated its solution with a scenario convex program (SCP) by replacing the unknown dynamics with the sampled trajectories. We then established that the optimal solution of the SCP gives a feasible solution for the RCP with a given confidence, and formulated a lower bound on the required number of samples. Our approach provided a lower bound on the safety probability of the stochastic unknown system when the number of sampled data is larger than a specific lower bound that depends on the desired confidence. We extended the results to a class of non-convex barrier-based safety problems and showed the applicability of our proposed approach using three case studies.
}

\bibliographystyle{plain}
\bibliography{sample}

\begin{thebibliography}{10}

\bibitem{abate2020formal}
Alessandro Abate, Daniele Ahmed, Mirco Giacobbe, and Andrea Peruffo.
\newblock Formal synthesis of lyapunov neural networks.
\newblock {\em IEEE Control Systems Letters}, 5(3):773--778, 2020.

\bibitem{althoff2017commonroad}
Matthias Althoff, Markus Koschi, and Stefanie Manzinger.
\newblock Commonroad: Composable benchmarks for motion planning on roads.
\newblock In {\em 2017 IEEE Intelligent Vehicles Symposium (IV)}, pages
  719--726. IEEE, 2017.

\bibitem{andersen2000mosek}
Erling~D Andersen and Knud~D Andersen.
\newblock The mosek interior point optimizer for linear programming: an
  implementation of the homogeneous algorithm.
\newblock In {\em High performance optimization}, pages 197--232. Springer,
  2000.

\bibitem{BK08}
Christel Baier and Joost-Pieter Katoen.
\newblock {\em Principles of model checking}.
\newblock MIT press, 2008.

\bibitem{belta2017formal}
Calin Belta, Boyan Yordanov, and Ebru~Aydin Gol.
\newblock {\em Formal methods for discrete-time dynamical systems}, volume~15.
\newblock Springer, 2017.

\bibitem{berberich2020data}
Julian Berberich, Johannes K{\"o}hler, Matthias~A Muller, and Frank Allgower.
\newblock Data-driven model predictive control with stability and robustness
  guarantees.
\newblock {\em IEEE Transactions on Automatic Control}, 2020.

\bibitem{borrmann2015control}
Urs Borrmann, Li~Wang, Aaron~D Ames, and Magnus Egerstedt.
\newblock Control barrier certificates for safe swarm behavior.
\newblock {\em IFAC-PapersOnLine}, 48(27):68--73, 2015.

\bibitem{calafiore2006scenario}
Giuseppe~C Calafiore and Marco~C Campi.
\newblock The scenario approach to robust control design.
\newblock {\em IEEE Transactions on automatic control}, 51(5):742--753, 2006.

\bibitem{clark2021control}
Andrew Clark.
\newblock Control barrier functions for stochastic systems.
\newblock {\em Automatica}, 130:109688, 2021.

\bibitem{coulson2020distributionally}
Jeremy Coulson, John Lygeros, and Florian D{\"o}rfler.
\newblock Distributionally robust chance constrained data-enabled predictive
  control.
\newblock {\em arXiv:2006.01702}, 2020.

\bibitem{dawson2022safe}
Charles Dawson, Zengyi Qin, Sicun Gao, and Chuchu Fan.
\newblock Safe nonlinear control using robust neural lyapunov-barrier
  functions.
\newblock In {\em Conference on Robot Learning}, pages 1724--1735. PMLR, 2022.

\bibitem{Anu:2013}
Verband der Automobilindustrie.
\newblock Lane keeping assist systems.
\newblock
  \url{https://www.vda.de/en/topics/safety-and-standards/lkas/lane-keeping-assist-systems.html},
  2020.

\bibitem{esfahani2014performance}
Peyman~Mohajerin Esfahani, Tobias Sutter, and John Lygeros.
\newblock Performance bounds for the scenario approach and an extension to a
  class of non-convex programs.
\newblock {\em IEEE Transactions on Automatic Control}, 60(1):46--58, 2014.

\bibitem{girard2005reachability}
Antoine Girard.
\newblock Reachability of uncertain linear systems using zonotopes.
\newblock In {\em International Workshop on Hybrid Systems: Computation and
  Control}, pages 291--305. Springer, 2005.

\bibitem{girard2016safety}
Antoine Girard, Gregor G{\"o}ssler, and Sebti Mouelhi.
\newblock Safety controller synthesis for incrementally stable switched systems
  using multiscale symbolic models.
\newblock {\em IEEE {T}ransactions on {A}utomatic {C}ontrol, vol. 61, no. 6,
  pp. 1537--1549}, 2016.

\bibitem{cvx}
Michael Grant and Stephen Boyd.
\newblock {CVX}: Matlab software for disciplined convex programming, version
  2.1.
\newblock \url{http://cvxr.com/cvx}, March 2014.

\bibitem{han2015sublinear}
Shuo Han, Ufuk Topcu, and George~J Pappas.
\newblock A sublinear algorithm for barrier-certificate-based data-driven model
  validation of dynamical systems.
\newblock In {\em 54th IEEE conference on decision and control (CDC)}, pages
  2049--2054, 2015.

\bibitem{hernandez2001chebyshev}
MA~Hern{\'a}ndez.
\newblock Chebyshev's approximation algorithms and applications.
\newblock {\em Computers \& Mathematics with Applications}, 41(3-4):433--445,
  2001.

\bibitem{jagtap20202020control}
Pushpak Jagtap, George~J Pappas, and Majid Zamani.
\newblock Control barrier functions for unknown nonlinear systems using
  {G}aussian processes.
\newblock {\em arXiv:2010.05818}, 2020.

\bibitem{jagtap2019formal}
Pushpak Jagtap, Sadegh Soudjani, and Majid Zamani.
\newblock Formal synthesis of stochastic systems via control barrier
  certificates.
\newblock {\em IEEE Transactions on Automatic Control}, 66(7):3097--3110, 2020.

\bibitem{kanamori2012worst}
Takafumi Kanamori and Akiko Takeda.
\newblock Worst-case violation of sampled convex programs for optimization with
  uncertainty.
\newblock {\em Journal of Optimization Theory and Applications},
  152(1):171--197, 2012.

\bibitem{kenanian2019data}
Joris Kenanian, Ayca Balkan, Raphael~M Jungers, and Paulo Tabuada.
\newblock Data driven stability analysis of black-box switched linear systems.
\newblock {\em Automatica}, 109:108533, 2019.

\bibitem{kesten1998algorithmic}
Yonit Kesten, Amir Pnueli, and Lion Raviv.
\newblock Algorithmic verification of linear temporal logic specifications.
\newblock In {\em International Colloquium on Automata, Languages, and
  Programming}, pages 1--16. Springer, 1998.

\bibitem{kushner1967stochastic}
Harold~J Kushner.
\newblock Stochastic stability and control.
\newblock Technical report, Brown Univ Providence RI, 1967.

\bibitem{LAB15}
M.~Lahijanian, S.~B. Andersson, and C.~Belta.
\newblock Formal verification and synthesis for discrete-time stochastic
  systems.
\newblock {\em IEEE Transactions on Automatic Control}, 60(8):2031--2045, Aug
  2015.

\bibitem{majumdar2020symbolic}
Rupak Majumdar, Kaushik Mallik, and Sadegh Soudjani.
\newblock Symbolic controller synthesis for {B}{\"u}chi specifications on
  stochastic systems.
\newblock In {\em Proceedings of the 23rd International Conference on Hybrid
  Systems: Computation and Control}, pages 1--11, 2020.

\bibitem{murali2022scenario}
Vishnu Murali, Ashutosh Trivedi, and Majid Zamani.
\newblock A scenario approach for synthesizing k-inductive barrier
  certificates.
\newblock {\em IEEE Control Systems Letters}, 6:3247--3252, 2022.

\bibitem{nejat2021}
Ameneh Nejati, Abolfazl Lavaei, Pushpak Jagtap, Sadegh Soudjani, and Majid
  Zamani.
\newblock Formal verification of unknown discrete- and continuous-time
  systems:a data-driven approach.
\newblock {\em Under review}, 2021.

\bibitem{niu2021safety}
Luyao Niu, Hongchao Zhang, and Andrew Clark.
\newblock Safety-critical control synthesis for unknown sampled-data systems
  via control barrier functions.
\newblock In {\em 2021 60th IEEE Conference on Decision and Control (CDC)},
  pages 6806--6813. IEEE, 2021.

\bibitem{plambeck2022}
Swantje Plambeck, G{\"o}rschwin Fey, and Schyga.
\newblock Decision tree models of continuous systems.
\newblock In {\em 27th International Conference on Emerging Technologies and
  Factory Automation (ETFA)}. IEEE, 2022.

\bibitem{prajna2004safety}
Stephen Prajna and Ali Jadbabaie.
\newblock Safety verification of hybrid systems using barrier certificates.
\newblock In {\em International Workshop on Hybrid Systems: Computation and
  Control}, pages 477--492. Springer, 2004.

\bibitem{prajna2007framework}
Stephen Prajna, Ali Jadbabaie, and George~J Pappas.
\newblock A framework for worst-case and stochastic safety verification using
  barrier certificates.
\newblock {\em IEEE Transactions on Automatic Control}, 52(8):1415--1428, 2007.

\bibitem{robey2020learning}
Alexander Robey, Haimin Hu, Lars Lindemann, Hanwen Zhang, Dimos~V Dimarogonas,
  Stephen Tu, and Nikolai Matni.
\newblock Learning control barrier functions from expert demonstrations.
\newblock {\em arXiv:2004.03315}, 2020.

\bibitem{robey2021learning}
Alexander Robey, Lars Lindemann, Stephen Tu, and Nikolai Matni.
\newblock Learning robust hybrid control barrier functions for uncertain
  systems.
\newblock {\em IFAC-PapersOnLine}, 54(5):1--6, 2021.

\bibitem{sadraddini2018formal}
Sadra Sadraddini and Calin Belta.
\newblock Formal guarantees in data-driven model identification and control
  synthesis.
\newblock In {\em Proceedings of the 21st International Conference on Hybrid
  Systems: Computation and Control (part of CPS Week)}, pages 147--156, 2018.

\bibitem{salamati2020data}
Ali Salamati, Sadegh Soudjani, and Majid Zamani.
\newblock Data-driven verification under signal temporal logic constraints.
\newblock {\em 21st IFAC World Congress}, 2020.

\bibitem{sloth2012compositional}
Christoffer Sloth, George~J Pappas, and Rafael Wisniewski.
\newblock Compositional safety analysis using barrier certificates.
\newblock In {\em Proceedings of the 15th ACM international conference on
  Hybrid Systems: Computation and Control}, pages 15--24, 2012.

\bibitem{esmaeil2013adaptive}
Sadegh Soudjani and Alessandro Abate.
\newblock Adaptive and sequential gridding procedures for the abstraction and
  verification of stochastic processes.
\newblock {\em SIAM Journal on Applied Dynamical Systems}, 12(2):921--956,
  2013.

\bibitem{soudjani2015dynamic}
Sadegh Soudjani, Alessandro Abate, and Rupak Majumdar.
\newblock Dynamic {B}ayesian networks as formal abstractions of structured
  stochastic processes.
\newblock In {\em 26th International Conference on Concurrency Theory}, pages
  169--183. Schloss Dagstuhl, 2015.

\bibitem{esmaeil2015faust}
Sadegh Soudjani, Caspar Gevaerts, and Alessandro Abate.
\newblock Faust 2: Formal abstractions of uncountable-state stochastic
  processes.
\newblock In {\em 21st International Conference on Tools and Algorithms for the
  Construction and Analysis of Systems (TACAS 2015)}. Newcastle University,
  2015.

\bibitem{SM18_Concentration}
Sadegh Soudjani and Rupak Majumdar.
\newblock Concentration of measure for chance-constrained optimization.
\newblock {\em IFAC-PapersOnLine}, 51(16):277--282, 2018.

\bibitem{SVORENOVA2017230}
M{\'a}ria Svore{\v n}ov{\'a}, Jan K{\v r}et{\'\i}nsk{\'y}, Martin Chmel{\'\i}k,
  Krishnendu Chatterjee, Ivana {\v C}ern{\'a}, and Calin Belta.
\newblock Temporal logic control for stochastic linear systems using
  abstraction refinement of probabilistic games.
\newblock {\em Nonlinear Analysis: Hybrid Systems}, 23:230 -- 253, 2017.

\bibitem{SZ.19}
Abdalla Swikir and Majid Zamani.
\newblock Compositional synthesis of symbolic models for networks of switched
  systems.
\newblock {\em IEEE Control Syst. Lett.}, 3(4):1056--1061, 2019.

\bibitem{tabuada09}
Paulo Tabuada.
\newblock {\em Verification and Control of Hybrid Systems: A Symbolic
  Approach}.
\newblock Springer, 2009.

\bibitem{tabuada2020data}
Paulo Tabuada and Lucas Fraile.
\newblock Data-driven stabilization of {SISO} feedback linearizable systems.
\newblock {\em arXiv preprint arXiv:2003.14240}, 2020.

\bibitem{wang2017safety}
Li~Wang, Aaron~D Ames, and Magnus Egerstedt.
\newblock Safety barrier certificates for collisions-free multirobot systems.
\newblock {\em IEEE Transactions on Robotics}, 33(3):661--674, 2017.

\bibitem{wang2019data}
Zheming Wang and Rapha{\"e}l~M Jungers.
\newblock Data-driven computation of invariant sets of discrete time-invariant
  black-box systems.
\newblock {\em arXiv:1907.12075}, 2019.

\bibitem{wijesuriya2019bayes}
Viraj~Brian Wijesuriya and Alessandro Abate.
\newblock Bayes-adaptive planning for data-efficient verification of uncertain
  {M}arkov decision processes.
\newblock In {\em International Conference on Quantitative Evaluation of
  Systems}, pages 91--108. Springer, 2019.

\bibitem{wood1996estimation}
GR~Wood and BP~Zhang.
\newblock Estimation of the lipschitz constant of a function.
\newblock {\em Journal of Global Optimization}, 8(1):91--103, 1996.

\bibitem{yang2020efficient}
Zhengfeng Yang, Min Wu, and Wang Lin.
\newblock An efficient framework for barrier certificate generation of
  uncertain nonlinear hybrid systems.
\newblock {\em Nonlinear Analysis: Hybrid Systems}, 36:100837, 2020.

\bibitem{Zamani.2017b}
Majid Zamani and Murat Arcak.
\newblock Compositional abstraction for networks of control systems: A
  dissipativity approach.
\newblock {\em IEEE Trans. Control Network Syst.}, 5(3):1003--1015, 2018.

\bibitem{zamani2014symbolic}
Majid Zamani, Peyman~Mohajerin Esfahani, Rupak Majumdar, Alessandro Abate, and
  John Lygeros.
\newblock Symbolic control of stochastic systems via approximately bisimilar
  finite abstractions.
\newblock {\em IEEE Transactions on Automatic Control}, 59(12):3135--3150,
  2014.

\bibitem{zamani2017towards}
Majid Zamani, Ilya Tkachev, and Alessandro Abate.
\newblock Towards scalable synthesis of stochastic control systems.
\newblock {\em Discrete Event Dynamic Systems}, 27(2):341--369, 2017.

\bibitem{zhang2010safety}
Lijun Zhang, Zhikun She, Stefan Ratschan, Holger Hermanns, and Ernst~Moritz
  Hahn.
\newblock Safety verification for probabilistic hybrid systems.
\newblock In {\em International Conference on Computer Aided Verification},
  pages 196--211. Springer, 2010.

\end{thebibliography}

\appendix
\section{Lipschitz continuity of the max function}
\Ali{
	\begin{lemma}
		The maximum of Lipschitz continuous functions $f_i:X\rightarrow \mathbb R$, $i=1,2,\ldots,m$, is a Lipschitz continuous function. The Lipschitz constant of the maximum is the sum of the Lipschitz constants of $f_i$.
		\label{lem:lipseveral} 
	\end{lemma}
}
\begin{proof}
	\Ali{Suppose that two Lipschitz continuous functions $f_1$ and $f_2$ have Lipschitz constants $L_1$ and $L_2$, respectively.
		One can rewrite $g = \max(f_1,f_2)$ as:
		\begin{align*}
			g = \max(f_1,f_2)=\frac{f_1+f_2+|f_1-f_2|}{2}.
		\end{align*}
		Then, we can use triangle inequality to show that
		\begin{align*}
			|g(x) - g(y)| & \le \frac{1}{2}[|f_1(x)-f_1(y)|+|f_2(x)-f_2(y)| + \\ &  \big ||f_1(x)-f_2(x)|-|f_1(y)-f_2(y)|\big | ]\\
			& \le \frac{1}{2}[L_1\|x-y\| + L_2\|x-y\| + |f_1(x)-f_1(y)| + \\ & |f_2(x)-f_2(y)|] \le \frac{1}{2}[L_1\|x-y\| + L_2\|x-y\| + \\& L_1\|x-y\| + L_2\|x-y\|]
			= (L_1+L_2)\|x-y\|.
		\end{align*}
		Therefore, $\max(f_1,f_2)$ is also a Lipschitz continuous function with Lipschitz constant $L_1+L_2$. 
		This argument can be extended inductively to the maximum of every number of functions.
	}
\end{proof}
\begin{lemma}
	\Ali{For any two analytic functions $f_1:X\rightarrow \mathbb R$ and $f_2:X\rightarrow \mathbb R$ with a compact domain $X$, $L:=\max(L_1,L_2)$ is a Lipschitz constant of $\max(f_1,f_2)$.}
\end{lemma}
\begin{proof}
	\Ali{
		Note that
		\begin{equation*}
			g(x) = \max(f_1(x),f_2(x)) = \begin{cases}
				f_1(x) & \text{if} \quad f_1(x) - f_2(x)\ge 0\\
				f_2(x) & \text{if} \quad f_1(x)- f_2(x)\le 0.
			\end{cases}
		\end{equation*}
		The function $f_1 - f_2$ is also analytic, thus has a finite number of zeros in a compact domain. Let us denote the finite set of zeros as $Z$.
		We first show this for one-dimensional compact domains $X\subset \mathbb R$.
		Take two points $x,y\in X$ such that $x<y$, and define $Z\cap [x,y] = \{z_1,z_2,\ldots,z_m\}$
		such that $z_i<z_{i+1}$ for any $i=1,2,\ldots,m-1$.
		Then we have
		\begin{align*}
			|g(y) - g(x)| = |f_{i_y}(y) - f_{i_m}(z_m) & + f_{i_m}(z_m) - f_{i_{m-1}}(z_{m-1})+\ldots\\
			& + f_{i_2}(z_2)-f_{i_1}(z_1) + f_{i_1}(z_1) - f_{i_x}(x)|,
		\end{align*}
		for some appropriate choices of $i_x,i_y,i_1,\ldots, i_m$ all from the set $\{1,2\}$. Since $g(z_j) = f_1(z_j) - f_2(z_j) = 0$, we can set the index of $f$ to symbol that belongs to the set $\{1,2\}$ when the function is evaluated at any $z_j$. Then, we have
		\begin{align*}
			& |g(y) - g(x)|\\
			& = |f_{i_y}(y) - f_{i_y}(z_m) + f_{i_m}(z_m) - f_{i_m}(z_{m-1})+\ldots + \\ & f_{i_2}(z_2)-f_{i_2}(z_1) + f_{i_x}(z_1) - f_{i_x}(x)|
			\le \\ & |f_{i_y}(y) - f_{i_y}(z_m)| + |f_{i_m}(z_m) - f_{i_m}(z_{m-1})|+\ldots +  \\ & |f_{i_2}(z_2)-f_{i_2}(z_1) | + |f_{i_x}(z_1) - f_{i_x}(x)|\\
			& \le L_{i_y}(y-z_m) + L_{i_m}(z_m-z_{m-1}) +\ldots +\\ & L_{i_2}(z_2-z_1) + L_{i_x}(z_1-x)\\
			& L(y-z_m) + L(z_m-z_{m-1}) +\ldots +L(z_2-z_1) + L(z_1-x)\\
			&  = L (y-x),
		\end{align*}
		where $L =\max(L_1,L_2) = \max(L_{i_y},L_{i_x},L_{i_1},\ldots,L_{i_m})$. This concludes the proof for one-dimensional case.}
	
	\Ali{We now prove the statement for multi-dimensional case.
		Take two points $x,y\in X\subset\mathbb R^n$ with $x = (x_1,\ldots,x_n)$ and $y = (y_1,\ldots,y_n)$. The functions $f_1,f_2$ have Lipschitz constants $L_1,L_2$, which means
		\begin{equation}
			\label{eq:Lip}
			|f_i(y_1,\ldots,y_n) - f_i(x_1,\ldots,x_n) |\le L_i\|(y_1-x_1,\ldots,y_n-x_n)\|,\quad i\in\{1,2\}.
		\end{equation}
		Define the line segment that connects these two points as $D := \{\lambda y + (1-\lambda)x\,|\, \lambda\in[0,1]\}$. Let us know restrict the domain of the function $g$ to $D$ and define:
		\begin{align*}
			& h:[0,1]\rightarrow\mathbb R,\quad h(\lambda) := g(\lambda y + (1-\lambda)x) = \\ & \max(f_1(\lambda y + (1-\lambda)x),f_2(\lambda y + (1-\lambda)x)).
		\end{align*}
		We can now apply the first part of the proof to get:
		\begin{equation}
			\label{eq:h}
			|h(1)-h(0)|\le L'|1-0|,
		\end{equation}
		where $L'$ is the maximum of the Lipschitz constants of $f_1(\lambda y + (1-\lambda)x)$ and $f_2(\lambda y + (1-\lambda)x)$ with respect to $\lambda$. To get these Lipschitz constants, we use \eqref{eq:Lip}:
		\begin{align*}
			& |f_i(\lambda_1 y + (1-\lambda_1)x) - f_i(\lambda_2 y + (1-\lambda_2)x)|
			\le \\ & L_i\|(\lambda_1-\lambda_2)(y-x)\| = L_i|\lambda_1-\lambda_2|\,\|y-x\| \\
			& = 
			(L_i\|y-x\|)\,|\lambda_1-\lambda_2|
		\end{align*}
		Therefore, the Lipschitz constants of $f_1(\lambda y + (1-\lambda)x)$ for a given $x,y$ with respect to $\lambda$ is $L_i\|y-x\|$.
		Replacing definitions in \eqref{eq:h}, we have
		\begin{equation*}
			|g(y)-g(x)|\le L' = \max(L_1\|y-x\|,L_2\|y-x\|) =\|y-x\|\max(L_1,L_2).
		\end{equation*}
		This completes the proof.}
\end{proof}
\Ali{\section{Proof of Corollary \ref{rem:gofepsilon}}
	\label{sec:proof_g}
	The probability distribution from which $x_i$ is sampled must satisfy Assumption~\ref{ass:G}. This assumption requires having a strictly increasing function $G:\mathbb{R}_0^+ \rightarrow [0,1]$ that satisfies
	\begin{align*}
		\pr[\mathrm{b}(x,r)]\geq G(r),\qquad \forall x \in X.
	\end{align*}
	Since we assume that samples are collected uniformly, $\pr[\mathrm{b}(x,r)]$ for every small ball centered at every $x \in X$ with radius $r=\epsilon$ can be computed by dividing the volume of this ball by the whole state set volume. Given that one needs to find the maximum ball that is valid for $\forall x \in X$, and some points $x$ lie on the border of the hyper-rectangular state set, the maximum ball is a semi-hypersphere in general, whose volume can be computed as $\dfrac{1}{2^n}\frac{\pi^{\frac{n}{2}}}{\Gamma(\frac{n}{2}+1)}\epsilon^n$ with the Gamma function defined as $\Gamma(k) = 1\times 2\times 3\ldots\times(k-1)$ and $\Gamma(k+\frac{1}{2})=\frac{1}{2}\times \frac{3}{2}\times\ldots(k-\frac{3}{2})(k-\frac{1}{2})\pi^{\frac{1}{2}}$ for all positive integers. Dividing this value by the whole state set volume, which is $\prod_{i=1}^{n} \eta_x(i)$ for $\eta_x(i)$ as the length of the edges in each direction, gives us $G(\epsilon)$.}

\Ali{\section{Proof of Corollary \ref{rem:gofepsilon2}}
	The proof is similar to the proof of Corollary~\ref{rem:gofepsilon} in ~\ref{sec:proof_g}. Here, the centered ball with the maximum volume is the intersection of the whole state set sphere and the small ball $r=\epsilon$ centered at any point on the border of the state set sphere. The volume of this intersection, which is the volume of two separate caps, can be computed as:
	\begin{align*}
		V_n^{cap}(\tilde r, \rm c_1) + V_n^{cap}(\epsilon , \rm c_2),
	\end{align*}
	where 
	\begin{align*}
		V_n^{cap}(\tilde r, \rm c_1) = \frac{1}{2}\frac{\pi ^{\frac{n}{2}}}{\Gamma(\frac{n}{2}+1)}\tilde r^n \mathrm I(1-\frac{\rm c_1^2}{\tilde r^2};\frac{n+1}{2},\frac{1}{2}),
	\end{align*}
	and
	\begin{align*}
		V_n^{cap}(\epsilon, \rm c_2) = \frac{1}{2}\frac{\pi ^{\frac{n}{2}}}{\Gamma(\frac{n}{2}+1)}\epsilon^n \mathrm I(1-\frac{\rm c_2^2}{\epsilon^2};\frac{n+1}{2},\frac{1}{2}),
	\end{align*}
	for $\rm c_1 = \frac{2 \tilde r^2- \epsilon ^2}{2 \tilde r}$, and $\rm c_2 = \frac{\epsilon^2}{2 \tilde r}$.
	By dividing the intersection volume by the volume of the whole hypersphere state set, which is 
	\begin{align*}
		V_n(\tilde r) =\frac{\pi^{\frac{n}{2}}}{\Gamma(\frac{n}{2}+1)}\tilde r^n,
	\end{align*}
	one can compute $G(\epsilon)$ as in Corollary \ref{rem:gofepsilon2}.
}

\Ali{
	\section{Coefficients of the computed barrier certificates in floating point format with 16 digits.}
	\begin{center}
		\resizebox{.7\columnwidth}{!}{%
			\begin{tabular}{||c c c ||} 
				\hline
				Temperature Verification& Lane Keeping & Synthesizing a \\
				for 3 Rooms&  System & Controller \\
				[0.5ex] 
				\hline
				$1.118824712343290 \times 10^{-1}$ & $2.200050812923097\times 10^{-4}$ & $1.189325015407815 \times 10$ \\
				\hline
				$1.121295401333170 \times 10^{-1}$ & $3.901846347425760 \times 10^{-1}$ & $-1.070392322770013\times10^3$  \\
				\hline
				$1.122576531449860 \times 10^{-1}$ & $1.480240596483330 \times 10^{-1}$ & $3.612276124685787\times10^4$  \\
				\hline
				$-3.751401155407000 \times 10^{-3}$ & $-2.825312554914731\times 10^{-4}$ & $-5.417521260597183\times10^5$\\ 
				\hline
				$-4.728480781000000 \times 10^{-3}$ & $9.905388481691000 \times 10^{-3}$ & $3.046603167514221\times 10^6$\\ 
				\hline
				$-2.284303936564000 \times 10^{-3}$ & $-6.672383448890000 \times 10^{-3}$ & -\\ 
				\hline
				$-3.761231117922648 \times 10^0$ & $-6.918249590565419\times 10^{-4}$ & -\\ 
				\hline
				$-3.815332731044874 \times 10^0$ & $4.678025224577894\times 10^{-4}$ & -\\ 
				\hline
				$-3.803570830339135 \times 10^0$ & $-1.539512818952500 \times 10^{-2}$ & -\\ 
				\hline
				$9.993049903406006 \times 10$ & $4.518033593474370 \times 10^{-1}$ & -\\ 
				\hline
			\end{tabular}
		}
	\end{center}
}
\Ali{In the above table, the values in first two columns from top to the bottom are $\{b_0,\ldots,b_9\}$ in respective case studies. The values in the the third column from top to the bottom are $\{b_0,\ldots,b_4\}$ in the last case study.}
\end{document}